\numberwithin{equation}{section}
\numberwithin{equation}{section}
\numberwithin{equation}{section}
\newcolumntype{P}[1]{>{\centering\arraybackslash}p{#1}}
\newtheorem{theorem}{Theorem}[section]
\newtheorem{lemma}[theorem]{Lemma}
\newcommand{\xTilda}{\tilde{\bm{x}}}
\newcommand{\zZ}{Z^\star}
\newcommand{\zz}{z^\star}
\begin{document}
\title{Modeling  Frequency and Severity of Claims with the Zero-Inflated Generalized  Cluster-Weighted Models}

\author{Nikola Po\v cu\v ca$^*$,  Petar Jevti\' c$^{**}$, Paul D. McNicholas$^*$, and Tatjana Miljkovic$^{\dagger}$}

\date{\small $^*$Department of Mathematics and Statistics, McMaster University, Hamilton, Ontario, Canada.\\
$^{**}$Department of Mathematics and Statistics, Arizona State University, Phoenix, AZ, U.S.\\
$^{\dagger}$Department of Statistics, Miami University, Oxford, OH, U.S.}
\maketitle
\doublespacing
\small

\begin{abstract}

In this paper, we propose two important extensions to cluster-weighted models (CWMs). First, we extend CWMs to have generalized cluster-weighted models (GCWMs) by allowing modeling of non-Gaussian distribution of the continuous covariates, as they frequently occur in insurance practice. Secondly, we introduce a zero-inflated extension of GCWM (ZI-GCWM) for modeling insurance claims data with excess zeros coming from heterogenous sources. Additionally, we give two expectation-optimization (EM) algorithms for parameter estimation given the proposed models. An appropriate simulation study shows that, for various settings and in contrast to the existing mixture-based approaches, both extended models perform well. Finally, a real data set based on French auto-mobile policies is used to illustrate the application of the proposed extensions.

\end{abstract}
\textsc{Key Words:} GCWM, CWM, clustering, automobile claims.\\
\textsc{KEL Classification:}  C02, C40, C60.\\
\section{Introduction}\label{sec:introduction}
A significant number of clustering methods have been proposed for sub-grouping the data in the area of computer science, biology, social science, statistics, marketing, etc. \cite{Ingrassia+Punzo+Vittadini+Minotti:2015} proposed a cluster-weighted model (CWM) framework as a flexible family of mixture models for fitting the joint distribution of a random vector composed of a response variable and a set of mixed-type covariates with the assumption that continuous covariates come from Gaussian distribution. CWMs with Gaussian assumptions have been proposed by \cite{Gershenfeld:1997}, \cite{Gershenfeld:Schoner+Metois:1999}, and \cite{Gershenfeld:1999} in a context of media technology. Some extensions of this class of models have been considered by \cite{Punzo+Ingrassia:2015}, \cite{Ingrassia+Minotti+Punzo:2014}, \cite{Ingrassia+Minotti+Vittadini:2012}, \cite{subedi13,subedi15}, and \cite{punzo17}. These clustering methods are somewhat lacking for modelling insurance data, e.g., high excess zeros for claim count, heavy-tail loss distribution, deductible, or limits.

Sub-grouping of insurance policies based on risk classification is a standard practice in insurance. The heterogenous nature of insurance data allows for explorations of many different techniques for sub-grouping risk. As a result, there is a growing number of papers in the area of mixture modeling of univariate and multivariate insurance data to account for heterogeneity of risk. \cite{Lee+Lin:2010}, \cite{Verbelen+Gong+Antonio+Badescu+Lin:2015}, and \cite{Miljkovic+Grun:2016} proposed mixture models for univariate loss data, and mixture modeling of univariate insurance data has been extended to the multivariate context. A finite mixture of bivariate Poisson regression models with an application to insurance ratemaking was studied by \cite{Bermudez+Karlis:2012}. A Poisson mixture model for count data was considered by \cite{Brown+Buckley:2015} with application in managing a Group Life insurance portfolio. Recently, \cite{risks_miljkovic} reviewed two complementary mixture-based clustering approaches (CWMs and mixture-based clustering for an ordered stereotype model) for modeling unobserved heterogeneity in an automobile insurance portfolio, depending on the data structure under consideration. 

In this paper, we extend the CWM family proposed by \cite{Ingrassia+Punzo+Vittadini+Minotti:2015} to allow for modeling of non-Gaussian continuous covariates and a zero-inflated Poisson (ZIP) claims data with excess zeros, which are commonly seen in the insurance applications. We consider a generalized cluster-weighted model (GCWM) as well as a zero-inflated GCWM. Two partitioning methods are considered with two separate expectation-maximization (EM) algorithms \citep{Dempster+Laird+Rubin:1977}. The EM algorithm is based on the complete-data likelihood, which encompasses the observed data together with the missing data and/or latent variables. The EM algorithm can be highly effective for maximum likelihood estimation when data is incomplete or is assumed to be incomplete. The first EM algorithm is for parameter estimation for the GCWM models, while the second EM is for parameter estimation for the GCWM. We show that the Bernoulli and Poisson GCWM accurately estimate the initialization of the EM algorithm for the zero inflated GCWM model. These models utilize individual claims data and should be useful in the areas of ratemaking and risk management.

This paper is organized as follows. The GCWM and ZI-GCWM approaches are discussed in Section~\ref{sec:model}, and parameter estimation is discussed in Section~\ref{sec:estmeth}. Then, our methodology is applied to real data on French automobile claims and an extensive simulation study is conducted (Section~\ref{sec:numapp}). This paper concludes with a discussion and some suggestions for future work (Section~\ref{sec:sim}).

\section{Methodology}\label{sec:model}

\subsection{Background}

Let $(\bm{X^{'}}, Y)^{'}$  be the pair of a vector of covariates  $\bm{X}$ and a response variable $Y$. Assume this pair is defined on some sample space $\Omega$ that takes values in an appropriate Euclidian subspace. Now, assume that there exists $K$ non-overlapping partitions of $\Omega$, denoted as $\Omega_1, \ldots, \Omega_K$.  \cite{Gershenfeld:1997} characterized CWMs as a finite mixture of GLMs; hence, the joint distribution $(\bm{X^{'}}, Y )^{'}$ has the form
 \begin{align}
 f(\bm x, y; \bm{\Phi})= \sum_{k=1}^{K} \tau_k q(y|\bm{x};\bm{\vartheta}_k)p(\bm{x};\bm{\theta}_k),
\label{eq1}
\end{align}
where $\bm{\Phi}:=\{\bm{\vartheta}_k, \bm{\theta}_k\}$ denotes the model parameters.
The pair $q(y|\bm{x};\bm{\vartheta}_k)$ and $p(\bm{x};\bm{\theta}_k)$ are conditional and marginal distributions of $(\bm{X^{'}}, Y)^{'}$, respectively, while $\tau_k$ is the $k$th mixing proportion such that $\sum_{k=1}^{K}\tau_k=1$, $\tau_k>0$.
\cite{Ingrassia+Punzo+Vittadini+Minotti:2015} proposed a flexible family of mixture models for fitting the joint distribution of a random vector $(\bm{X^{'}}, Y)^{'}$ by splitting the covariates into continuous and discrete, i.e., $ \bm{X}=(\bm{V}',  \bm{W}')'$. The assumption of independence between continuous and discrete covariates allows us to multiply their corresponding marginal distributions. Thus, for this setting the model in \eqref{eq1} is reformulated as follows
\begin{align}
 f(\bm{x}, y; \bm{\Phi})= \sum_{k=1}^{K} \tau_k q(y|\bm{x};\bm{\vartheta}_k)p(\bm{x};\bm{\theta}_k)=\sum_{k=1}^{K} \tau_k q(y|\bm{x};\bm{\vartheta}_k)p(\bm{v}; \bm{\theta}_k^{\star})p(\bm{w};\bm{\theta}_k^{\star\star})
\label{eq2}
\end{align}
where $\bm{v}$ and $\bm{w}$ are respectively the realized vectors of continuous and discrete covariates, $q(y|\bm{x};\bm{\vartheta}_k)$ is the conditional density of $Y|\bm{x}$ with parameter vector $\bm{\vartheta}_k$, $p(\bm{v};\bm{\theta}_k^{\star})$ is the marginal distribution of $\bm{v}$ with parameter vector $\bm{\theta}_k^{\star}$, and $p(\bm{w};\bm{\theta}_k^{\star\star})$ is the marginal distribution of $\bm{w}$ with parameter vector $\bm{\theta}_k^{\star\star}$. As before, $\bm{\Phi}$ denotes the set containing all model parameters. 
The conditional distribution $q(y|\bm{x};\bm{\vartheta}_k)$ is assumed to belong to an exponential family of distributions and as such can be modeled in the GLM framework. The marginal distribution of continuous covariates is assumed to be Gaussian. 

Unfortunately, this last assumption is too strong for use in insurance related applications, specifically in rate-making or reserving. To relax it, we develop an extension which allows for non-Gaussian covariates as discussed in the next section.

\subsection{Generalized Cluster-Weighted Model}
We proceed to extend \eqref{eq2} by splitting the continuous covariates $\bm{V}$ via $\bm{V}=(\bm U^{'}, \bm T^{'})^{'}$, where $\bm{U}$ contains the non-Gaussian covariates and $\bm{T}$ contains the Gaussian covariates. We do this to retain the possibility of having both Gaussian and non-Gaussian covariates. Then, with the final relabelling of parameter vectors, \eqref{eq2} becomes
\begin{align}
 f(\bm x, y; \bm{\Phi})= \sum_{k=1}^{K} \tau_k q(y|\bm{x};\bm{\vartheta}_k)p(\bm{t};\bm{\theta}_k^{\star})p(\bm{w};\bm{\theta}_k^{\star\star})p(\bm{u};\bm{\theta}_k^{\star\star\star}),
\label{eq3}
\end{align}
which we refer to as the generalized cluster-weighted model (GCWM). Here, $p(\bm{t};\bm{\theta}_k^{\star})$ denotes the marginal density of Gaussian covariates, with parameter vector $\bm{\theta}^{\star}_k$, and $p(\bm{u};\bm{\theta}_k^{\star\star\star})$ denotes the marginal density of the non-Gaussian covariates with parameter vector $ \bm{ \theta}_k^{\star\star\star} $.

Due to its relevance for the applications in the actuarial domain, in this work, we make the particular choice of the multivariate log-normal distribution  for non-Gaussian covariates --- this, however, does not reduce the generality of our developed framework. With the log-normal assumption for $p(\bm{u};\bm{\theta}_k^{\star\star\star})$, we have that $\bm{u}$ is defined on $\mathbb{R}^p_+$ with parameter vector $\bm{\theta}_k^{\star\star\star}= (\bm{\mu}_k^{\star\star\star} ,\bm{\Sigma}_k^{\star\star\star})$ and probability density function
\begin{equation}\label{eqn:logn} 
p \left(  \bm{u}; \bm{\theta}_k^{\star\star\star} \right) = \frac{1}{(\prod_{i=1}^{p}u_{i})|\bm{ \Sigma}_k^{\star\star\star} |(2 \pi)^{\frac{p}{2}}}   \exp\left[-\frac{1}{2}(\ln\bm{ u}-\bm{\mu}_k^{\star\star\star})^{'}\bm{\Sigma}_k^{{\star\star\star}_{-1}}(\ln \bm {u}-\bm{\mu}_k^{\star\star\star})\right].
\end{equation} The derivation of \eqref{eqn:logn} can be found in the Appendix \ref{changeVarUni}, and the application of the procedure therein can be followed for other types of non-Gaussian covariates, thus generalizing the CWM model.


\subsection{Zero-Inflated Generalized Cluster-Weighted Model}

Using the zero-inflated Poisson model (ZIP) as a special and most widely used case of zero-inflated models we further extend the generalized cluster-weighted model class to zero-inflated generalized cluster-weighted model (ZI-GCWM). This choice by no means reduces the generality of our approach. 

We begin by noting that the single component ZIP model assumes that the inflated zeros emanate from both a Bernoulli and Poisson random variables while the non-zeros are assumed to come exclusively from the Poisson random variable  \citep[see][]{Lambert}. However, recent research  extends the single component ZIP models to mixture models for heterogeneous count data with excess zeros \citep[see][]{Bermudez+Karlis:2012}. In mixtures of ZIPs, zeros are assumed to come from multiple different Binomial and Poisson random variables. 

Thus, seen in the context of GCWM, considering the ZIP model, we can split the conditional density of the response variable $Y$, i.e., $p(y|\bm{x},\bm{\vartheta}_k)$, into zero and non-zero densities for each group $k$. The conditional probability mass associated with the event $\{y=0\}$ is characterized by $q(y = 0|\bm{x};\bm{\vartheta}_{k})$. For the event $\{y > 0\}$, the response variable $Y$ is conditionally distributed with density $q(y > 0|\bm{x}; \bm{\vartheta}_{k} )$. All this considered, given the conditional density for the ZIP model, \eqref{eq3} can be re-written as
 \begin{align}
 f(\bm x, y; \Phi)= \sum_{k=1}^{K} \tau_k \left[ q(y = 0|\bm{x};\bm{\vartheta}_{k} ) +  q(y > 0|\bm{x} ; \bm{\vartheta}_{k}  ) \right]   p(\bm{t};\bm{\theta}_k^{\star})p(\bm{w};\bm{\theta}_k^{\star\star})p(\bm{u};\bm{\theta}_k^{\star\star\star})
\end{align}
which characterizes the zero Inflated Generalized cluster-weighted model (ZI-GCWM).

Specifically, have the Poisson conditional density  denoted as $ q^P(y|\bm{x}; \lambda_k) $ where $y \in \{0,1,\dots\}$. Additionally, have vector $\xTilda := [\bm{1},\bm{x}]$ to contain the covariates $\bm{x}$ together with a placeholder for the intercept in the GLM and let $\bm{\beta}_k$ be a row coefficient vector.
The link function will be chosen to be log-link such that
 \begin{align}\label{g1link}
\lambda_k = e^{\xTilda \bm{\beta}_k'} && \text{and} & & 
q^P(y|\bm{ x} ; \lambda_{k} ) = e^{-\lambda_k} \frac{{\lambda_k}^y}{y!}.
 \end{align}
Also, have a Bernoulli model for the conditional density denoted as $ q^{B}(y|\bm{x}; \bm{\bar{\beta}}_k) $, where $\bar{\bm{\beta}_k}$ is a row coefficient vector.  Here, the link function is chosen to be logit link function so that
 \begin{align}\label{g2link}
 \psi_k =  \frac{e^{\xTilda \bm{\bar{\beta}}_k'}}{1+ e^{\xTilda  \bm{\bar{\beta}}_k'}}  && \text{and} && 
 q^B(y | \bm{x} ; {\psi}_k) = \begin{cases}
      \quad \psi_k, & y = 0,\\
     1 -  \psi_k,  & y > 0.
   \end{cases}
 \end{align}
By creating composition of two preceding models, we have the ZIP model in which zero counts come from two random variables. One is the Bernoulli random variable, which generates structural zeros, and the other is the Poisson random variable. The coefficients $\bm{\vartheta}_{k}=\{ \bm{\beta}_{k},  \bm{\bar{\beta}}_k \}$ correspond to the two above introduced conditional densities where the coefficients can be estimated as in \cite{Lambert}. The $k$th component of ZIP conditional density $q(y|\bm{x}; \bm{\vartheta}_{k}  )$ is 
 \begin{align*}
 q( y = 0| \bm{x} ; \bm{ \vartheta}_{k}  ) = \psi_k + (1 - \psi_k)e^{-\lambda_k}  & &  \text{and}  & &
q(y > 0 |  \bm{x} ; \bm{ \vartheta}_{k}  ) = (1 - \psi_k)e^{-\lambda_k} \frac{\left(\lambda_k \right)^y  }{y!}.
 \end{align*}
The parameter $\psi_k$ denotes the mean of the Bernoulli distribution of the $k$th component from which extra zeros emanate, and the parameter $ \lambda_k $ characterizes the $k$th Poisson distribution. 

In our numerical example related to automobile insurance, it will be shown that this allows for a more nuanced approach to handling the inflation of zeros coming from heterogeneous sources. 

\section{Parameter Estimation}\label{sec:estmeth}

The common approach for estimating parameters in finite mixture models is based on the EM algorithm \citep[see][for examples]{mcnicholas16a}.
The estimation of the developed Bernoulli-Poisson partitioning method is split into two EM algorithms. The first EM algorithm partitions the sample space, while the second EM algorithm optimizes the zero inflated portion.

\subsection{Note on Bernoulli-Poisson Sample Space Partitioning}

Unfortunately, in the context of mixture models for heterogeneous count data with excess zeros the difficulties are apparent  during the maximization step of the EM algorithm when means of covariates are very close together \cite[see][]{LimHwa}. The good news is that the misclassification error can be reduced using parsimonious models for the independent variables as in  \cite{McNicholas:2010}. 

However, in this work, we propose a new method to rectify this problem and partition the dataset using Bernoulli and Poisson GCWMs. Furthermore, we construct a ZI-GCWM using the previously generated Bernoulli and Poisson GCWMs. In the first EM algorithm we estimate parameters pertaining to the GCWM under the assumption of a Poisson model and, separately, we carry out the same process under the assumption of a Bernoulli model. Using the obtained parameter estimates from the two separate applications of the EM algorithm, we set the initialization parameters for the second EM algorithm pertaining to parameter estimation of the ZI-GCWM. The work of \cite{Lambert} specifies that the MLE estimates for the separate Poisson and Bernoulli models provide an excellent initial guess, allowing EM to converge quickly for ZIPs. The Bernoulli-Poisson sample space partitioning method consists of two separate EM algorithms. The first EM algorithm is for generating the GCWM models, while the second EM is for optimizing the ZI-GCWM.

Here, the joint probability density function $f^{ZI}$ becomes
 $$f^{ZI}(\bm{x},y,\Phi) = \sum_{k=1}^{K} \tau_k q^{ZI}_{k}(y|\bm{x};  \bm{\bar{\beta}}_k,\bm{ \beta}_k)  p(\bm{t};\bm{\theta}_k^{\star})p(\bm{w};\bm{\theta}_k^{\star\star})p(\bm{u};\bm{\theta}_k^{\star\star\star}). $$  
 The new conditional density is now result of a model in which each component is captured by the conditional probability density function that is a mixture of particular Bernoulli and particular Poisson densities
\begin{align}
q^{ZI}_{k}(y|\bm{x};  \bm{\bar{\beta}}_k,\bm{ \beta}_k) & := q^B(y|\bm{x}; \bm{\bar{\beta}}_k) +(1-  q^B(y|\bm{x}; \bm{\bar{\beta}}_k) ) q^P(y|\bm{x};\bm{\beta}_k) \nonumber \\
& = q(y = 0|\bm{x};\bm{\vartheta}_{k} ) +  q(y > 0|\bm{x} ; \bm{\vartheta}_{k}), \quad k \in \{ 1, ..., K  \}.
\label{ziGCWM}
\end{align}

The initialization parameters for the second EM algorithm are provided by Bernoulli and Poisson GCWMs from \eqref{ziGCWM} giving parameter pairs ($ \psi_k,\lambda_k  $). The second EM procedure then optimizes the zero inflated GCWM. The ZI-GCWM is compared against the standard Poisson GCWM using a likelihood ratio test which is discussed in Section~\ref{subsec:: compareZero}.

\subsection{EM Algorithm for Partitioning of Sample Space}

The EM algorithm is based on the local  maximum likelihood estimation. 
The initial values of the parameter estimates can be generated from a variety of strategies outlined in \cite{initialPaperGrassiaRef}. 
 The algorithm proceeds by alternation of the E- and M-steps to update parameter estimates. 
The convergence criterion of the EM algorithm is based on the Aitken acceleration. It is used to estimate the asymptotic maximum of the log-likelihood at each iteration of the EM algorithm when the relative increase in the log-likelihood function is no bigger than a small pre-specified tolerance value or the number of iterations reach a limit. 
To find an optimal number of components, maximum likelihood estimation is obtained over a range of $K$ groups, and the best model is selected based on the Bayesian information criterion (BIC).   

In this subsection, we explain the parameter estimation in line with the GCWM methodology proposed by \cite{Ingrassia+Punzo+Vittadini+Minotti:2015}. The proposed GCWM  is based on the assumption that $q(y|\bm{x},\bm{\vartheta}_k)$ belongs to the exponential family of distributions that are strictly related to GLMs. The link function in each group $k$ defines the relationship between the linear predictor and the expected value of the distribution function.   
Here we are interested in the estimation of the vector $\bm {\beta}_k$, thus the distribution of $Y|\bm{x}$ is denoted by $q(y|\bm{x}; \bm{\beta}_k, \nu_k)$, where $\nu_k$ signifies an additional parameter to account for when a distribution belongs to a two-parameter exponential family.\footnote{In the work of \cite{Ingrassia+Punzo+Vittadini+Minotti:2015} this parameter is referred to as $\lambda_k$.} 

Recall that the marginal distribution $p(\bm{x}; \bm \theta_k)$ has the following components: $p(\bm{t}; \bm \theta_k^{\star})$, $p(\bm{w}; \bm \theta_k^{\star\star})$, and $p(\bm{u};\bm \theta_{k}^{\star\star\star})$. The first marginal density  $p(\bm{t}; \bm \theta_k^{\star}:=( \bm {\mu}_k^{\star}, \bm{\Sigma}_k^{\star}) )$ is modeled as a  Gaussian distribution with mean $\bm {\mu}_k^{\star}$ and covariance matrix $\bm{\Sigma}_k^{\star}$. 
 The marginal density of discrete covaraites $p(\bm{w};\bm{\theta}_{k}^{\star\star})$ is assumed to have for each finite discrete covariate in $\bm{W}$, a representative binary vector $\bm{w}^r=(w^{r1},\ldots,w^{rc_r})^{'}$, where $w^{rs}=1$ if $w_r = s\in\{1, \ldots, c_r\}$, 
and $w^{rs}=0$ otherwise.

Given the preceding assumptions about discrete covariates, the marginal density is written as
\begin{align}
p(\bm {w}; \bm {\gamma_k})=\prod_{r=1}^{d}\prod_{s=1}^{c_r}(\gamma_{krs} )^{w^{rs}}
\label{eq31}
\end{align}
for $k=1, \ldots, K$, where $\bm {\gamma}_k=(\gamma_{k1}^{'}, \ldots, \gamma_{kd}^{'})^{'}$, $\bm \gamma_{kr}=(\gamma_{kr1}^{'}, \ldots, \gamma_{krc_d}^{'})^{'}$, $\gamma_{krs} > 0$, and  $\sum_{s=1}^{c_r}\gamma_{krs}$, $r=1,\ldots,q$. The density $p(\bm {w}, \bm{\gamma}_k)$ represents the product of $d$ conditionally independent multinomial distributions with parameters $\bm{\gamma}_{kr}$, $r=1,\ldots, d$. Finally, the third marginal density $p(\bm{u};\bm{\theta}_{k}^{\star\star\star})$ will be modelled with a multivariate log-normal distribution having a location parameter vector $ \bm{\mu}_k^{\star\star\star}$ and scale parameter matrix $\bm{\Sigma}_k^{\star\star\star} $.

Let $(\bm x_1, y_1),\ldots, (\bm x_n, y_n)$ be a sample of $n$ independent observations drawn from model in \eqref{eq3}. Consider a latent random variable $Z_{ik}$.  The realization $z_{ik}$ of the latent indicator variable takes the value of $z_{ik}=1$ indicating that observation $(\bm{x_i}, y_i)$ originated from the $k$th mixture component and $z_{ik}=0$ otherwise.

 Given the sample, the complete-data likelihood function $L_c(\bm\Phi)$ is given by
\begin{align}
L_c(\bm\Phi)=\prod_{i=1}^{n}\prod_{k=1}^{K}\left[{\tau_k}q(y_i|x_i; \bm \beta_k, \nu_{k})p(t_i; \bm\mu_k^{\star}, \bm\Sigma_k^{\star}) p(w_i; \gamma_k)p(u_i; \bm{\mu}_k^{\star\star\star},\bm{\Sigma}_k^{\star\star\star}) \right]^{z_{ik}},
\label{eq27}
\end{align}

Taking the logarithm of \eqref{eq27}, the complete-data log-likelihood is 
\begin{align}
\ell_c(\bm\Phi)= \sum_{i=1}^{n}\sum_{k=1}^{K}{z_{ik}}\big[&\log(\tau_{k}) + \log{q}(y_i|x_i; \bm{\beta}_k,\nu_k)\nonumber\\&+  \log p(t_i; \bm{\mu}_k^{\star}, \bm{\Sigma}_k^{\star}) + \log p(w_i; \bm{\gamma}_k) +\log {p}(u_i; \bm{\mu}_k^{\star\star\star},\bm{\Sigma}_k^{\star\star\star}) \big].\label{CompleteLiklihood}
\end{align}

On the $(s+1)$th iteration, the E-step requires calculation of the conditional expectation of $\ell_c(\bm\Phi)$. Because $\ell_c(\bm\Phi)$ is linear with respect to  $z_{ik}$, we simplify the calculation to the current expectation of $Z_{ik}$, where $Z_{ik}$ is the random variable corresponding to the realization $z_{ik}$. Given the previous parameters $\bm\Phi^{(s)}$ and the observed data,  we calculate the current conditional expectation of $Z_{ik}$ as
\begin{equation*}\begin{split}
    {\pi_{ik}}^{(s)} &= {E}[Z_{ik} |(\bm{x_i}, y_i); \bm{\Phi}^{(s)}]\\
     &= \frac{{\tau_k}^{(s)}q(y_i|x_i; \bm \beta_k^{(s)}, \nu^{(s)}_{k})p(t_i; \bm\mu_k^{{\star}(s)}, \bm\Sigma_k^{{\star}(s)}) p(w_i; \bm \gamma_k^{(s)})p(u_i; \bm{\mu}_k^{\star\star\star (s)},\bm{\Sigma}_k^{\star\star\star (s)})}{f(\bm{x}_i, y_i; \bm{\Phi}^{(s)})
\label{eq29}                       }.
\end{split}\end{equation*}
%
On the M-step of the $(s+1)$th iteration, the conditional expectation of $\ell_c(\bm\Phi)$ denoted as a function $Q(\Phi|\Phi^{(s)})$ is maximized with respect to $\Phi $, where the values of $z_{ik}$ in \eqref{CompleteLiklihood} are replaced by their current expectations $\pi_{ik}$ yielding 
\begin{equation}\begin{split}
Q(&\bm\Phi|\bm\Phi^{(s)}) = \sum_{i=1}^{n}\sum_{k=1}^{K}{\pi_{ik}^{(s)}} \big[\log(\tau_{k}) + \log{q}(y_i|x_i;\bm{\beta}_k,\nu_k)+ \log p(t_i; \bm{\mu}_k^{\star}, \bm{\Sigma}_k^{\star})  + \log p(w_i; \bm{\gamma}_k)\\ 
&\qquad\qquad\qquad\qquad+\log {p}(u_i; \bm{\mu}_k^{\star\star\star },\bm{\Sigma}_k^{\star\star\star })\big] \\
&=\sum_{i=1}^{n}\sum_{k=1}^{K}{\pi_{ik}^{(s)} \log(\tau_{k}) + \sum_{i=1}^{n}\sum_{k=1}^{K}{\pi_{ik}^{(s)}}\log{q}(y_i|x_i;\bm{\beta}_k},\nu_k) +\sum_{i=1}^{n}\sum_{k=1}^{K} {\pi_{ik}^{(s)}}\log p(t_i; \bm{\mu}_k^{\star}, \bm{\Sigma}_k) \\
&\qquad\qquad\qquad\qquad+\sum_{i=1}^{n}\sum_{k=1}^{K}{\pi_{ik}^{(s)}}\log p(w_i; \bm{\gamma}_k) + \sum_{i=1}^{n}\sum_{k=1}^{K}{\pi_{ik}^{(s)}}\log {p}(u_i; \bm{\mu}_k^{\star\star\star},\bm{\Sigma}_k^{\star\star\star}).\label{Qfunction}
\end{split}\end{equation}

The M-step requires maximization of the $Q$-function with respect to $\bm \Phi$ which can be done separately for each term on the right hand side in \eqref{Qfunction}. 
As a result, the parameter updates on the $(s+1)$th iteration are
\begin{align*}
{\hat{\tau}_k}^{(s+1)}&=\frac{1}{n} \sum_{i=1}^n \pi_{ik}^{(s)}, && && {\hat{\bm{\mu}}_k}^{\star (s+1)}=\frac{1}{\sum_{i=1}^n \pi_{ik}^{(s)}} \sum_{i=1}^n \pi_{ik}^{(s)}\bm t_i, &&  && {\hat{\bm \gamma}^{(s+1)}_{kr}} =\frac{\sum_{i=1}^n \pi_{ik}^{(s)} \omega^{rs}_i} {\sum_{i=1}^n \pi_{ik}^{(s)}},
\end{align*}
$$
 {\widehat{\bm \Sigma^{}}_k}^{\star(s+1)}=\frac{1}{\sum_{i=1}^n \pi_{ik}^{(s)}} \sum_{i=1}^n \pi_{ik}^{(s)}(\bm t_i-\hat{\bm \mu}^{(s+1)}_k) (\bm t_i-\hat{\bm \mu}^{(s+1)}_k)^{'}.
$$
Parameter updates for the log-normal distribution are as follows
\begin{equation*}\begin{split}
{\hat{\bm \mu}_k}^{\star\star\star (s+1)}&=\frac{1}{\sum_{i=1}^n \pi_{ik}^{(s)}} \sum_{i=1}^n \pi_{ik}^{(s)}\ln \bm u_i,\\
{\widehat{\bm \Sigma}_k}^{\star\star\star(s+1)}&=\frac{1}{\sum_{i=1}^n \pi_{ik}^{(s)}} \sum_{i=1}^n \pi_{ik}^{(s)}(\ln \bm u_i-\hat{\bm \mu}^{\star\star\star(s+1)}_k) (\ln \bm u_i-\hat{\bm \mu}^{\star\star\star(s+1)}_k)^{'}. 
\end{split}\end{equation*}
For each $k=1,\ldots,K$, the update for $\bm{\beta}_k$ could be computed by maximizing
\begin{align}
\sum_{i=1}^{n}\pi^{(s)}_{ik} \log{q}(y_i|\bm x_i;\bm \beta_k,\nu_k).
\label{eq30}
\end{align}
The numerical optimization for each term is discussed in \cite{Wedel+DeSabro:1995} and \cite{Wedel:2002}.
For additional implementation information, the reader is referred to the manual of the {\tt flexCWM} package \citep{Ingrassia+Punzo+Vittadini+Minotti:2015} for ${\sf R}$ \citep{R18}.

For modelling severity, each observation $y_i$ must be weighted according to the number of claims the client has incurred \citep[see][pages 118--119]{frees2015}. Thus \eqref{eq30} is re-written as 
\begin{align}
\sum_{i=1}^{n}\pi^{(s)}_{ik} \mathcal{\omega}_i \log{q}(y_i|\bm x_i;\bm \beta_k,\nu_k),
\label{eqFrees}
\end{align}
which is maximized to give the update for $\bm{\beta}_k$.
Here, $\mathcal{\omega}_i$ is the number of claims the client occurs over the exposure period. Because $\mathcal{\omega}_i$ is constant at every EM iteration $s$, the flexCWM package is easily amenable to account for this methodological adjustment.


\subsection{EM Algorithm for Zero-Inflated Model} 
For a zero-inflated model, the EM-algorithm follows a similar procedure as above to optimize the conditional density given in \eqref{ziGCWM}.  Specifically, the log-likelihood function of $\psi_k$ and $\lambda_k$ is 
\begin{equation*}\begin{split}
l(\psi_k,\lambda_k| \{y_i\}_{i=1}^n,\{\bm{x}_i\}_{i=1}^n) &= \sum_{\{y_i = 0\}} \log \big[ e^{ \bm{ \xTilda}_i \bm{\bar{\beta}}_k^{'}  } + \exp{( - e^ { -\bm{\xTilda}_i \bm{\beta}_k^{'} })} \big]  \\ & +  \sum_{\{y_i > 0\}} \left( y_i \xTilda_i \bm{\beta}_k^{'} + e^{ \xTilda_i \bm{\beta}_k^{'} } \right)  - \sum_{i=1}^n  \log \left(1 + e^ {\xTilda_i \bm{\bar{\beta}}_k^{'} } \right) - \sum_{\{y_i > 0\}} \log(y_i ! ).
\end{split}\end{equation*}
Due to the first term, the log-likelihood function is difficult to maximize, however \cite{Lambert} gives a meaningful solution.  Consider a random variable $\zZ_{ik}$ indicating with ${\zz_{ik}} = 1$ when $y_i$ is generated from the Bernoulli random variable of partition $k$, and $\zz_{ik} = 0$ when $y_i$ is generated from the Poisson random variable of the same partition.  Then, the complete-data log-likelihood is 
\begin{align*}
l_c(\psi_k,\lambda_k| \{y_i\}_{i=1}^n,\{\bm{x}_i\}_{i=1}^n,{\bm{\zz}_k}) &= \sum_{i=1}^n \left( \zz_{ik}\xTilda_i \bar{\bm{\beta}_k }^{'} - \log\left(1+ e^{ \xTilda_i \bar{\bm{\beta}_k }^{'}}\right) \right)  \\ & + \sum_{i=1}^n (1-\zz_{ik}) (y_i \xTilda_i \bm{\beta}_k^{'}  - e^{\xTilda_i \bm{\beta}_k^{'}})+ \sum_{i=1}^n (1-\zz_{ik})\log(y_i!)\\
&= l_c(\psi_k;\{y_i\}_{i=1}^n,\{\bm{x}_i\}_{i=1}^n,{{\bm{\zz}_k}}) + l_c(\lambda_k; \{y_i\}_{i=1}^n,\{\bm{x}_i\}_{i=1}^n,{{\bm{\zz}_k}}) \\
&+ \sum_{i=1}^n (1- \zz_{ik})\log(y_i!), 
\end{align*}
where $\bm{\zz}_k := \left[\zz_{1k}, ..., \zz_{nk} \right]$ is a realization of $\bm{\zZ}_k  := \left[\zZ_{1k}, ..., \zZ_{nk} \right]$.
 Note that $l_c(\psi_k,\lambda_k|\{y_i\}_{i=1}^n,\{\bm{x}_i\}_{i=1}^n,\bm{\zz}_k)$ can be separated allowing the maximization of $l_c(\psi_k; \{y_i\}_{i=1}^n,\{\bm{x}_i\}_{i=1}^n,\bm{\zz}_k)$ and $l_c(\lambda_k; \{y_i\}_{i=1}^n,\{\bm{x}_i\}_{i=1}^n,\bm{\zz}_k) $ independently for parameters $\psi_k$ and $\lambda_k$. With the EM algorithm, maximization of parameters are performed iteratively between estimating $\zZ_{ik}$ with its expectation under current estimates for $\lambda_k$ and $\psi_k$ (E-Step), and then maximizing the conditional expectation of the complete-data log-likelihood (M-Step). 


In the E-step, using current estimates $\psi_k^{(s)}$ and $ \lambda_k^{(s)} $ 
we calculate the expected value of ${{\zZ}_{ik}}$ by its posterior mean ${\hat{\zz}_{ik}^{(s)}}$ for each cluster $k$ at iteration $s$ as
\begin{align*}
{\hat{\zz}}_{ik}^{(s)} = \begin{cases}  \left[ 1 + \exp{\big(-\xTilda_i \bar{\bm{\beta}_k}^{'(s)} - e^ {\bm{\xTilda_i} \bm{\beta}_k^{'(s)}} \big) } \right]^{-1}, &  y_{i} = 0 \\
  0 \quad , & y_{i}> 0 .
\end{cases}
\end{align*}
%
The M-Step can be split into the maximization of two complete data log-likelihoods and the $\hat{\bm{\zz}}_k$ calculated from the previous iteration $(s)$ as
\begin{align}
 l_c(\lambda_k; \{y_i\}_{i=1}^n,\{\bm{x}_i\}_{i=1}^n| \hat{\bm{\zz}}_k^{(s)}) &= \sum_{i=1}^n (1- \hat{\zz}_{ik}^{(s)}) (y_i \xTilda_i \bm{\beta}_k^{'}  - e^{\xTilda_i \bm{\beta}_k^{'}})\label{eq7}.\\
l_c(\psi_k;\{y_i\}_{i=1}^n,\{\bm{x}_i\}_{i=1}^n|\hat{\bm{\zz}}_k^{(s)}) &=\sum_{i=1}^n \left( \hat{\zz}_{ik}^{(s)} \xTilda_i \bar{\bm{\beta}_k }^{'} - \log \left(1+ e^{ \xTilda_i \bar{\bm{\beta}_k }^{'}} \right) \right). \label{eq6}   
 \end{align}

The maximization of \eqref{eq7} for GLM coefficients $\lambda_k$ can be carried out using a weighted log-linear Poisson regression with weights $1 - \hat{\zz}_{ik}^{(s)}$ \citep[see][]{McCullaghNelder1989}), yielding $\lambda_k^{(s+1)}$.
While the parameter $\psi_k$ for \eqref{eq6} can be maximized over a gradient yielding $\psi_k^{(s+1)}$ \citep[see][]{Lambert}. 

\subsection{Comparing Zero-Inflated Models}\label{subsec:: compareZero}

Until recently the Vuong Test for non-nested models  \citep{vuongTest} has frequently been used to compare zero-inflated model with their non-zero inflated counterpart. However \cite{misuse} shows that a zero-inflated model and its non-zero inflated counterpart do not satisfy Vuong's criteria for non-nested models, and hence such use of the test is incorrect. Furthermore, the Vuong Test fails to identify evidence of zero-deflation leading to inconsistencies in the hypothesis test \citep[see][]{misuse}. To rectify this, \cite{newIntuitive} show that it is sufficient to test for zero-modificiation in the form of a likelihood ratio test, where the hypotheses are
\begin{align*}
& & H_0: \psi_k = 0 \quad \text{vs.} \quad H_1: \psi_k > 0, & &
\end{align*}
and the test statistic $\varphi$ is given by
\begin{equation}
\varphi = -2 \big[l(\tilde{\lambda_k}; \{y_i\}_{i=1}^n,\{\bm{x}_i\}_{i=1}^n) - l(\lambda_k, \psi_k; \{y_i\}_{i=1}^n,\{\bm{x}_i\}_{i=1}^n )\big].
\label{LRTest}
\end{equation}
The test statistic \eqref{LRTest} is shown to follow a chi-squared distribution ($0.5\chi^2_0+0.5\chi^2_{m}$) where $m$ is the  number of degrees of freedom of the model. For our purposes, $m$ is the number of covariates selected for the Bernoulli model in (\ref{g2link}) --- see Likelihood Ratio Tests in \cite{newIntuitive}. Note that  a $\chi^2_0$ distribution has 0 degrees of freedom, hence the $95{\mbox{\sl{th}}}$ percentile of a  $0.5\chi^2_0+0.5\chi^2_{m}$ distribution is equal to the $90{\mbox{\sl{th}}}$ percentile of a $\chi^2_{m}$ distribution and the test statistic at a level of significance of $\alpha=0.05$ equals that of a one-tailed $\chi^2$ test, with $m$ degrees of freedom, at a level of significance of $\alpha=0.10$ \citep[see][]{newIntuitive}. 

The function 
$l(\tilde{\lambda_k}; \{y_i\}_{i=1}^n,\{\bm{x}_i\}_{i=1}^n)$ is the log-likelihood of a single component GCWM Poisson model parametrized by $\tilde{\lambda_k}$. Recall that $\psi_k$ is the zero-inflation parameter of the $k$th partition.  In our approach, we will be using \eqref{LRTest} to test for evidence of zero-inflation for $k$th group, and then using BIC for model comparisons on $k$th group. This approach quickly determines if there is zero-inflation.  When evidence of zero-inflation is established, we search for the best linear model using the BIC. 

\section{Numerical Application}\label{sec:numapp}
\subsection{Dataset}
We illustrate the use of ZI-GCWM on the French motor severity and frequency datasets which are available as part of the {\sf R} package {\tt CASdatasets} \citep{Dutang+Charpentier:2016}. Previously they were used by \cite{Charpentier:2014} who demonstrated various GLM modeling approaches for fitting frequency and severity. 
The dataset consists of 413,169 motor third-party liability policies with the associated risk characteristics. The loss amounts by policy ID are also provided. 
\begin{table}[!htb]
\begin{center}
    \caption{The description of variables in the French Motor Third-Part Liability dataset.}
      \centering
       {\small \begin{tabular}{ll}
\hline
Attribute & Description \\
\hline
Policy ID & Unique identifier of the policy holder\\
Claim Nb & Number of claims during exposure period  (0,1,2,3,4)\\
Exposure & The exposure of policy in years (0--1.5) \\
Power & Power level of car ordered categorical (12 levels )\\
Car Age & Car age in years \\
Driver Age & Age of a legal driver \\
Brand & Car brands (7 types) \\
Gas & Diesel or Regular \\
Region & Regions in France (10 classifications)\\
Density & Number of inhabitants per km$^2$ \\
Loss Amount & Portion of claim the insurance policy pays\\
Severity & Average Claim calculated from aggregating Loss Amount and dividing by Claim Nb \\
\hline\end{tabular}}
\end{center}
\end{table}

\subsection{Discussion and Results}
\subsubsection{Modelling Severity}
In this section, we show improved results for GCWM  over CWM in modeling French motor losses. Furthermore, we investigate the results of the GCWM model for the valuation of heterogeneous risk. 

In the numerical analysis we consider the following covariates: population density ($Density$), driver age ($DriverAge$), car age ($CarAge$), car power level ($Power$),  and geographical region in France ($Region$). 
The
$CarAge$ is modelled as a categorical variable with five categories: $[0,1)$, $[1,5)$, $[5,10)$, $[10,15)$, and $15+$. Additionally, $DriverAge$ is modelled as a categorical variable with five categories: $[18,23)$, $[23,27)$, $[27,43)$, $[43,75)$, and $75+$. $Power$ is modelled into three categories as in \cite{Charpentier:2014}:
DEF, GH, and other. The fitted model is defined with the following expression
\begin{align}
g(\mathbb{E}\left[Y_{Severity}|\bm{x},  \bm{\beta}_k  \right]) = 
 & \beta_{k0} +  \beta_{kDensity}x_{Density}+ \beta_{kCar Age} x_{Car Age}+ \beta_{kDriver Age} x_{Driver Age} + \nonumber \\ &  \beta_{kRegion} x_{Region} + \beta_{kPower} x_{Power} \label{regressionModel}  =: \bm{\xTilda} \bm{\beta}_k^{'}.
\end{align}
The canonnical log-link function $g$ is used for the GCWM in \eqref{regressionModel}. 

Beginning with the continuous covariate $Density$, we want to inspect the shape of its univariate data to see if it follows Gaussian distribution. 
The left-hand side of Figure \ref{fig:vet1} clearly reveals that the the $Density$ is rather skewed right with several observations that report high value of density. This indicates a need for a transformation. The log-normal transformation clearly improves the fit (see the right side of Figure \ref{fig:vet1}).
\begin{figure}[!htb]
\begin{center}
\includegraphics[scale=0.63]{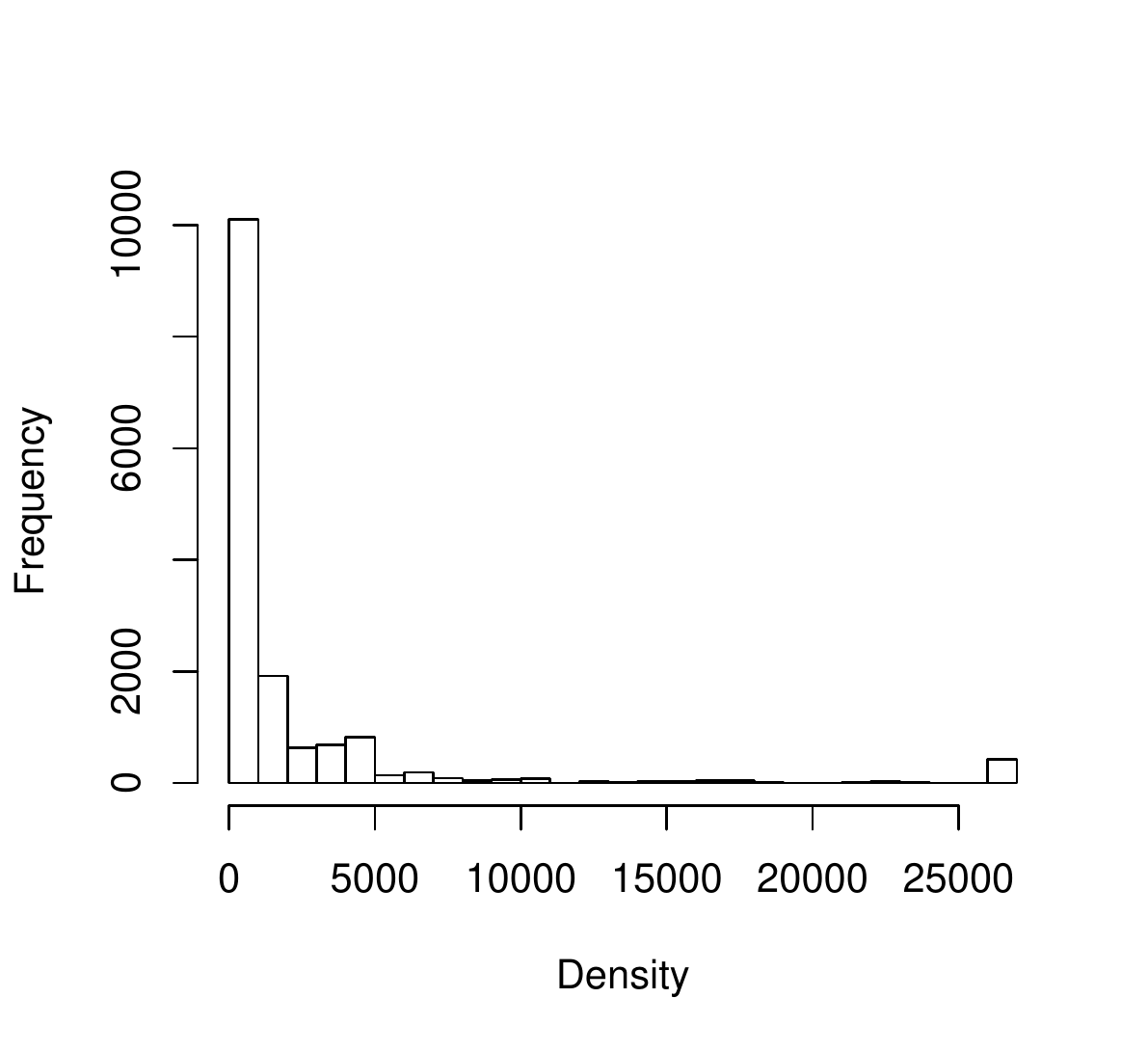}
\includegraphics[scale=0.63]{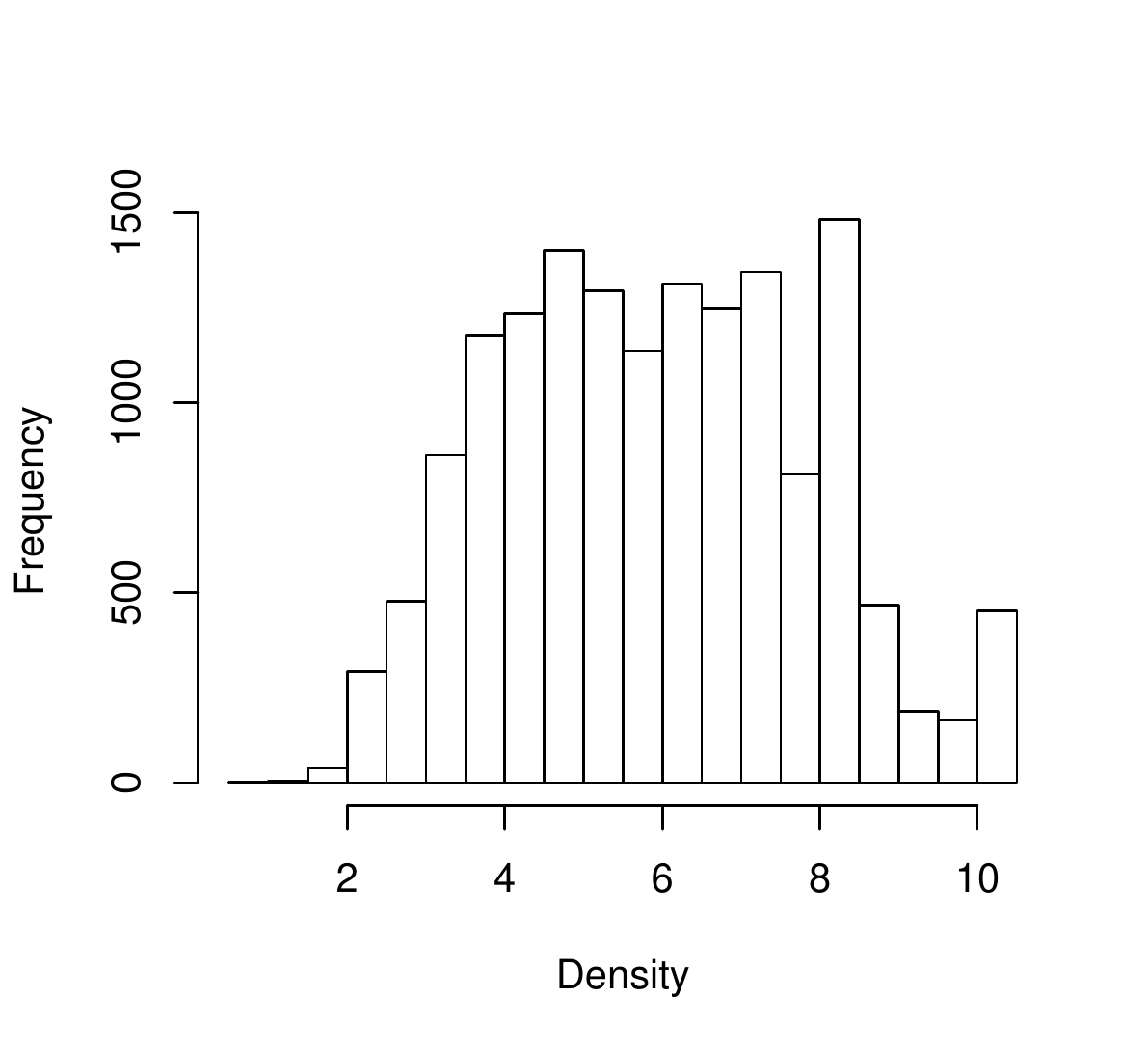}
\end{center}
\vspace{-0.2in}
\caption{Density variable: Left figure shows the fit when Gaussian distribution is imposed (CMW approach) to highly skewed data. Right figure shows the fit when log-normal assumption is applied (GCWM approach).}
\label{fig:vet1}
\end{figure}

Given the log-normal assumption, the result of the transformation is reflected in a better AIC and BIC for GCWM over CWM. Table~\ref{comparingCWM_models} shows a considerable difference in BIC and AIC comparing CWM and GCWM. The five component CWM with a BIC of $281,680$ is significantly higher than the four component GCWM with a considerably lower BIC of $88,564$.
\begin{table}[!htbp] \centering
  \caption{Comparison of AIC and BIC for CWM versus GCWM for $K$ number of clusters estimated. }\label{comparingCWM_models}
\begin{tabular}{@{\extracolsep{5pt}} rrrr}
\hline
Model & $$ & AIC & BIC \\
\hline
CWM & $1$ & $351,965$ & $352,149$\\
& $2$ &  $314,039$  & $314,414$ \\
& $3$ & $300,503$ & $301,068$ \\
& $4$ &  $285,979$ & $286,735$ \\
& $\bm{5}$ & $\bm{280,732}$ & $\bm{281,680}$ \\
\hline
GCWM &  1 & $110,627$ &  $110,810$ \\
& $2$ &  $88,828$ & $89,203$ \\
& $3$  & $88,338$  &$ 88,903$ \\
& $\bm{4}$ &  $ \bm{87,808} $ & $ \bm{88,564} $ \\
& 5 & $88,009$  & $88,956$ \\
\hline
\end{tabular}
\end{table}

	We now investigate the results of GCWM in relation to the valuation of risk. For practical uses, finding clusters allows us to create different classifications of risk for various groups of drivers.
After fitting the model, we then inspect the size of each cluster. The GCWM approach has chosen four components as the best model to represent the data. The size of each cluster is displayed in Table~\ref{table:sizeSev}. Attention is brought to largest quantity of drivers that are grouped into (blue) Cluster 3. This accounts for $ 47 \% $ of all drivers and is fairly concentrated in the center of Figure \ref{fig:vet1a}. From the results we can create an insurance model with the distinct characteristics.
\begin{table}[!htb]
\centering
\caption{Size and colours of clusters for the GCWM a model.}
\label{table:sizeSev}
\begin{tabular}{rrrr}
\hline
Cluster 1   & Cluster 2  & Cluster 3   & Cluster 4    \\
\hline
$3,064$ & $1,873$  &$ 7,259$ & $3,194$ \\
red & green & blue & orange \\
\hline
\end{tabular}
\end{table}
\begin{figure}[!htb]
\begin{center}
\includegraphics[scale=0.675]{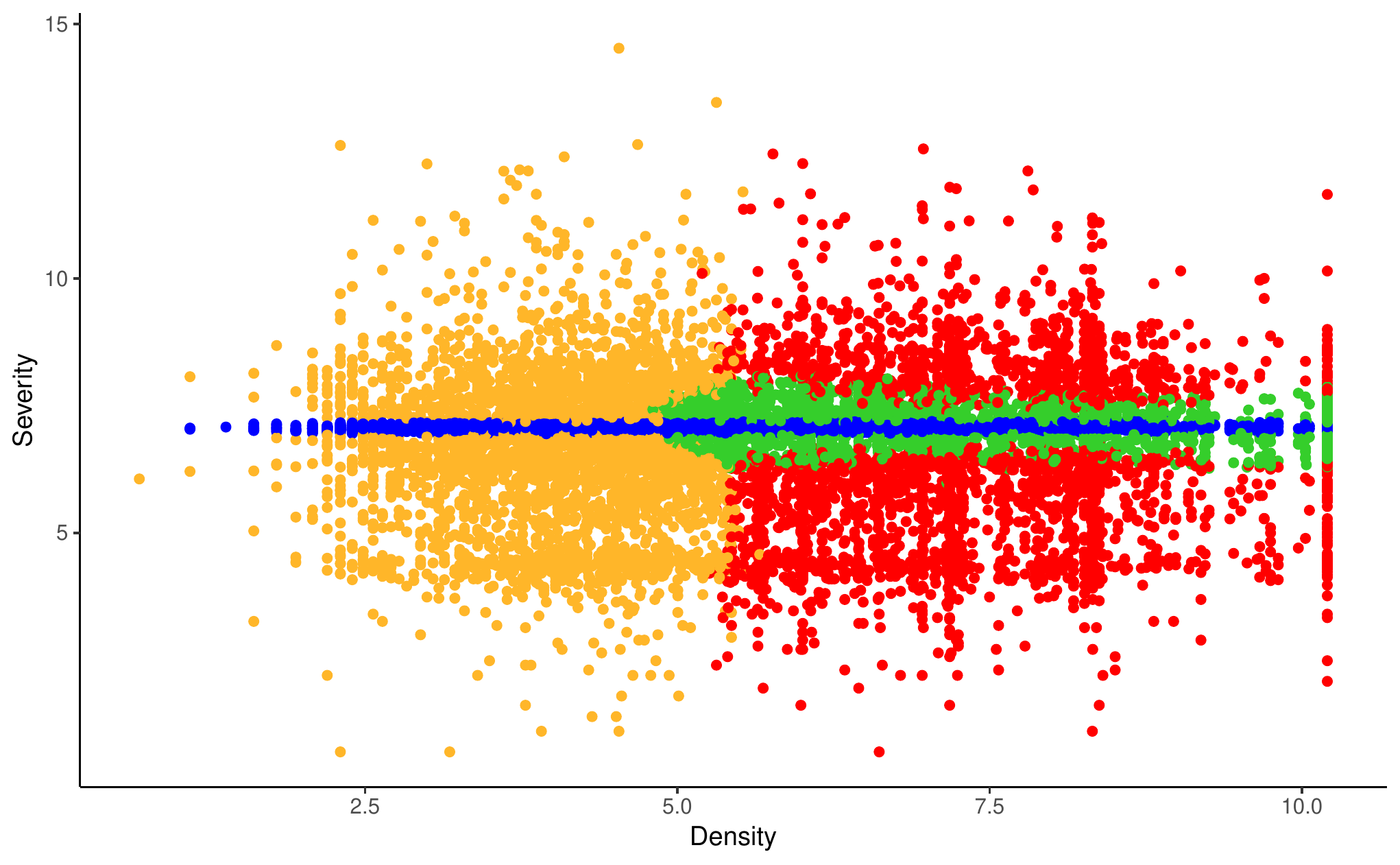}
\end{center}
\vspace{-0.2in}
\caption{Showing clusters by color scheme: Cluster 1 - red, Cluster 2 - green, Cluster 3 - blue, Cluster 4 - orange for $Severity$ vs $Density$ on a log scale. }
\label{fig:vet1a}
\end{figure}

The (blue) Cluster 3 drivers have both low variability and low level of claims, thus can be insured with a lower rate than all other drivers. Similarly, the (green) Cluster 2 drivers, have also low variability of claims but a higher level of claims, thus they should have a rate higher than (blue) Cluster 3  drivers. The (red) Cluster 1 drivers have the next highest level of claims and variability of the clusters in Figure \ref{fig:vet1}. Finally, the (orange) Cluster 4  drivers have the highest level of claims and variability in claims out of any of the other clusters. From a risk management perspective, the drivers belonging to  this cluster should be insured at the highest rate.

Table \ref{table:volSev} shows a breakdown of the types of drivers, ordered by volatility in descending order. Beginning with V1 volatility level, these drivers tend to have claims between \euro$1,033$ and  \euro$1,325$, with a standard deviation of  \euro$52$, and a mean of  \euro$1,171$. Moving to V2 , these drivers have the second lowest level of volatility. Also, drivers in this volatility level tend to have claims anywhere between  \euro$395$ and  \euro$3,221$, with a standard deviation of  \euro$559$, and a mean of  \euro$1,350$.  Proceeding to V3 volatility level, we observe that volatility in claims is greater than the preceding levels. Drivers in this volatility level have claims anywhere between  \euro$2$ and \euro$281,403$, with a mean of  \euro$2,956$, and a standard deviation of  \euro$11,878$. Finally, with V4 denotes the level of highest volatility. A claim in this level reach the highest recorded value of  \euro$2,036,833$. The mean of  claims in this volitility level is \euro$3,771$, and a standard deviation of  \euro$40,334$. 
\begin{table}[!htb]
\centering
\caption{Summarized volatility information of each cluster for Claims.}
\label{table:volSev}
\begin{tabular}{l|rrrr}
\hline
Volatility Level     & \multicolumn{4}{c}{Claim size}   \\ 
 (Cluster)     & Minimum & Mean  & Maximum & $\sigma$    \\
\hline
V1 (3) & 1,033 & 1,171 & 1,325 & \textbf{52} \\
V2 (2) & 395 & 1,350 & 3,221 & \textbf{559} \\
V3 (1) & 2 & 2,956 & 281,403 & \textbf{11,878 }\\
V4 (4) & 2 & 3,771 & 2,036,833 & \textbf{ 40,334 } \\ 
\hline
\end{tabular}
\end{table}

The coefficients in the discovered clusters are relevant for premium calculations in auto insurance. Table~\ref{severity_coef_table} (Appendix~\ref{app:tables}) shows the coefficients of the fitted model. In each cluster, statistical significance varies but overall the majority of coefficients are statistically significant.

In summary, the drivers have been clustered into four categories with distinct characteristics outlined in Table~\ref{table:volSev}. We have seen how using the results from GCWM, one can create a rate model based on clustering algorithms with various levels of risk represented in each cluster. GCWM found a group that contains a clear majority of drivers, in which the volatility of their claims was extremely low regardless of $Density$ or $DriverAge$. The results show that GCWM may potentially discover unique clusters that are otherwise hidden within the data.

 \subsubsection{Modeling Claims Frequency}

In this section, we model frequency of the French motor claims. We consider the covariates $Density$, $DriverAge$, $CarAge$, and $Exposure$. Here, $Exposure$ is used as an offset to account for the rate at which a claim occurs \citep[see][]{frees2015}. The choice of covariates stems from the previously modelled single component ZIP \citep{Charpentier:2014}.
The ZI-GCWM is fitted with the following expression:
\begin{align}
g_P(\mathbb{E}\left[Y_{ClaimNb}|\bm{x}, \bm{\beta}_k \right]/x_{Exposure}) & = 
  \beta_{k0} +  \beta_{kDensity}x_{Density}+ \beta_{kDriverAge}x_{DriverAge} \nonumber \\    &  \quad\quad\quad +  \beta_{kCarAge}x_{CarAge}  =: \bm{\xTilda} \bm{\beta}_k^{'},   \label{poissonReg}\\
g_{ZI}(\mathbb{E}\left[Y_{ClaimNb}|\bm{x}, \bar{\bm{\beta}}_k \right]/x_{Exposure})& = \bar{\beta}_{k0} +  \bar{\beta}_{kDensity} x_{Density} =: \bm{\xTilda} \bar{\bm{\beta}}_k^{'},  \label{zeroReg} 
\end{align}
where $Density$, $DriverAge$, and $CarAge$ are explanatory variables in a Poisson model, while only $Density$ is an explantory variable for a Bernoulli model. As in Section 4.2.1, we also impose a log-normal assumption on the $Density$ covariate. The link functions $g_p$ and $g_z$ are chosen to be the log link and logit link respectively as in (\ref{g1link}) and (\ref{g2link}).
After fitting ZI-GCWM we find two zero-inflated components and one Poisson component as the best model to represent the data. The size of each cluster is displayed in Table~\ref{table:sizeFreq} and we note a fairly even spread of claims across Clusters~1 and~2, with Cluster 3 only consisting of $0.53 \%$ of the claims.
\begin{table}[!htb]
\centering
\caption{Size of clusters and their colours for the ZI-GCWM a model.}
\label{table:sizeFreq}
\begin{tabular}{rrrr}
\hline
Cluster 1   & Cluster 2  & Cluster 3   \\
\hline
$191,601$& $219,393$ & $2,175$ \\
green & red & blue  \\
\hline
\end{tabular}
\end{table}

Similarly to modelling severity, the ZI-GCWM finds clusters with unique characteristics. This is evident when looking at the Claims versus Density plot (Figure~\ref{frequencyGraph}). We see that the ZI-GCWM has assigned the drivers into three distinct groups based on the Density of cities. Table \ref{summarycovariates} shows that Cluster 2 drivers live in the most dense areas with a mean of 7.37 km on the log-scale. Followed by Clusters 3 and 1 with a mean of 5.45 km and 4.05 km, respectively.
\begin{figure}[!ht]
\begin{center}
\includegraphics[scale=0.6]{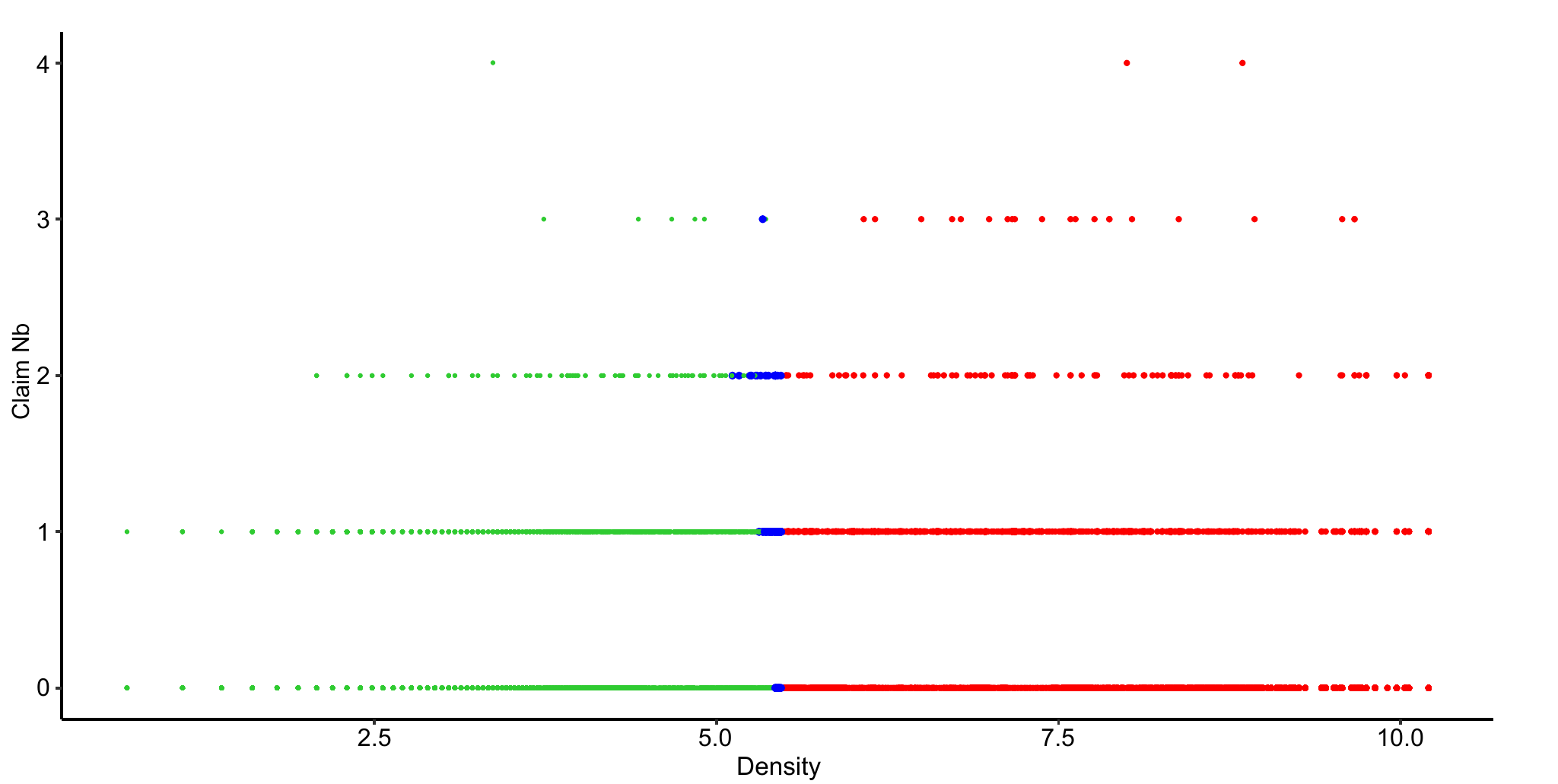}
\end{center}
\vspace{-0.2in}\caption{Showing clusters by color scheme --- Cluster~1 (green), Cluster~2 (red), Cluster~3 (blue) --- for $Claim Nb$ versus $Density$ on a log-scale.}
\label{frequencyGraph}
\end{figure}
\begin{table}[!htb]
 \begin{center}
 \caption{Summary of each cluster with log-normal assumptions for the $Density$ covariate measured in km on a log scale.} \label{summarycovariates}
\begin{tabular}{rlrrrr}
\hline
Cluster  & Color & Minimum & Mean & Maximum & $\sigma$  \\
\hline
1       & green  & 0.69 & 4.05 & 5.46  & 0.87 \\
2       & red    & 5.48 & 7.37 & 10.20 & 1.24 \\
3       & blue   & 5.12 & 5.45 & 5.47  & 0.03 \\
\hline
\end{tabular}
\end{center}
\end{table}

Table \ref{frequencySummary} (see Appendix~\ref{app:tables}) shows a summary of the coefficients for the zero-inflated model. The
significance codes are the same as in Table \ref{severity_coef_table}. In each cluster we can see that the majority of the coefficients are significant. Specifically for the coefficients pertaining to the the Bernoulli zero-count models. 

\begin{table}[!htb]
 \begin{center}
\caption{Comparison of BIC and Chi-square ($\chi^2$) test \eqref{LRTest} for CWM vs ZI-GCWM model on each cluster. }
\label{compareResults_models}
\begin{tabular}{r|rrc}
\hline
Cluster & BIC (CWM)  & BIC (ZI-GCWM) &  $\varphi $\\
\hline
 1 & $58,311$ & $\bm{58,180}$ & $ 4.61 < 154.70$ \\
 2 & $76,672$ & $\bm{76,301}$ & $ 4.61 < 395.59 $\\
 3 & $\bm{1,488}$  & $1,503$ &  -  \\ 
\hline 
All & $136,471$ & $135,984$ &  $10.64 < 270.03$
  \\
\hline
\end{tabular}
\end{center}
\end{table}
The results for comparing ZI-GCWM to CWM are shown in Table \ref{compareResults_models}. By using the likelihood ratio test in \eqref{LRTest} and comparing the models across each component, The Clusters~1 and 2 have evidence of zero inflation. In addition, ZI-GCWMs have BIC values that are lower in comparison to CWMs.  Cluster~3 has no evidence of zero-inflation which is in agreement with the BIC comparison. Thus, only a Poisson model is chosen for Cluster~3.  Overall, we see that there is evidence of zero inflation across the entire dataset when comparing ZI-GCWM versus CWM across all clusters. Thus, for rate making purposes, the ZI-GCWM can account for zero-inflation.

\section{Simulation Study}\label{sec:sim}

Two simulation studies are conducted to determine the validity of the log-normal assumption and the effectiveness of the Bernoulli-Poisson partitioning method. The first section outlines the need for a non-Gaussian assumption for the covariates. The second section shows the classification accuracy and other relevant analysis for the Bernoulli-Poisson method.

\subsection{GCWM Simulation Study}

In this section, we show how the proposed methodology performs under different simulation settings. The simulation study is generated based on the regression coefficients of the CASdataset used in the previous section. The aim of the simulation study is to test the accuracy and ability of both GCWM a and CWM to return estimates of true parameters when one or more of the covariates is log-normal and the other two are Gaussian. This design specifically tests both models in the event when one of the covariates is non-Gaussian. The motivation behind this choice lies in fact that many covariates used in insurance are likely to come from non-Gaussian distributions. Thus, this simulation is aimed to test the relevancy of CWM, which treats all covariates as Gaussian.

We define Model 1 as the baseline model, where the coefficients are selected to be reminiscent of those estimated from \textbf{CASdataset} and reported in upper portion of Table \ref{severity_coef_table} (see Appendix B). For the purposes of simplicity only three covariates are chosen and denoted as $X_1$ $X_2$ and $X_3$. The intercept for each component is increased to make sure simulated loss is positive. For ease of interpretation, these coefficients are then rounded and treated as true parameters for the simulation study.  A simulation with a 3 group mixture model is generated from the aforementioned true parameters however, the third covariate ($X_3$) is generated from a log-normal distribution.  In addition, the covariate $X_2$ for the second component is made insignificant and has no effect on the response.

Results aggregated from $1000$ runs are summarized in Table \ref{gcwmAccuracy} and Table \ref{mseTable}. Given a Gaussian assumption of for the residual error, we record the percentage of runs in which the error fall between a two-tailed $\% 5$ confidence interval in Table \ref{gcwmAccuracy}. For example, we report $90.10\%$ accuracy for predictor $X_2$ in the component 2. This means that $90.1\%$ of the time the true parameter is estimated within a $95\%$ confidence interval. 
Further, we create Models 2, 3, 4 and 5 by altering the parameters of Model~1 by $+30\%$, $-30\%$, $+50\%$, and $-50\%$, respectively, and keeping the second covariate of the second component as an insignificant predictor from the CASdataset model. This is done to test the accuracy of the GCWM and its sensitivity to changes in coefficient sizes. Based on the results in Table \ref{gcwmAccuracy}, we can see that GCWM performs well for all simulation settings.
\begin{table}[!htb]
\centering
\caption{GCWM a vs CWM Accuracy: covariate $X_3$ is treated as log-normally distributed, while the rest of covariates are of the Gaussian type.}
\label{gcwmAccuracy}
\begin{tabular}{ll|rrrr|rrrrr}
\hline
Mod. & $$ & Int. & $X_1$ &$X_2$ & $X_3$& Int. & $X_1$ &$X_2$ & $X_3$  \\
\hline
1     & 1         & 93.00\%   & 90.10\%  & 93.00\%  & 93.10\% & 0.00\% & 0.00\% & 0.00\% & 0.00\%   \\
      & 2         & 90.10\%   & \underline{0.00\%}   & 90.10\%  & 90.10\% & 0.00\% & 0.00\% & 0.00\% & 0.00\%  \\
      & 3         & 99.20\%   & 99.10\%  & 99.20\%  & 99.20\% & 0.00\% & 0.00\% & 0.00\% & 0.00\%  \\
      \hline
2     & 1         & 89.80\%   & 89.20\%  & 89.80\%  & 89.80\% & 0.00\% & 0.00\% & 4.60\% & 0.00\%  \\
      & 2         & 89.20\%   &\underline{0.00\%}   & 89.20\%  & 89.20\% & 0.00\% & \underline{0.00\%} & 0.00\% & 0.00\%   \\
      & 3         & 99.20\%   & 99.20\%  & 99.20\%  & 99.20\% & 0.00\% & 0.20\% & 1.70\% & 0.00\%  \\
      \hline
3     & 1         & 100.00\%  & 100.00\% & 100.00\% & 100.00\%  & 0.00\% & 0.00\% & 0.00\% & 0.00\% \\
      & 2         & 100.00\%  & \underline{0.00\%}   & 100.00\% & 100.00\% & 0.00\% & 0.00\% & 0.00\% & 0.00\% \\
      & 3         & 99.20\%   & 99.20\%  & 99.20\%  & 99.20\%  & 0.00\% & 0.00\% & 0.00\% & 0.00\%\\
      \hline
      4 & 1 & 88.60\% & 86.80\% & 88.60\% & 87.00\%  & 0.00\% & 0.00\% & 0.00\%  & 0.00\%  \\
  & 2 & 86.90\% &\underline{ 0.00\%}  & 86.90\% & 86.90\% & 0.00\% & \underline{0.00\%} & 0.00\%  & 0.00\%  \\
  & 3 & 99.20\% & 99.20\% & 99.20\% & 99.20\% & 0.00\% & 0.00\% & 0.00\%  & 0.00\% \\
      \hline
5 & 1 & 85.90\% & 84.90\% & 85.60\% & 85.90\% & 0.00\% & 0.00\% & 0.00\%  & 0.00\% \\
  & 2 & 85.00\% &\underline{ 0.00\%}  & 84.90\% & 84.90\% & 0.00\% & \underline{0.00\%} & 0.00\%  & 0.00\%  \\
  & 3 & 99.20\% & 99.20\% & 99.20\% & 99.20\% & 0.00\% & 0.20\% & 10.90\% & 0.00\% \\
      \hline
\end{tabular}
\end{table}

For CWM, in line with expectations, we note that barely any of the simulation runs are estimated correctly, as most of the results are zero. This means that the performance of CWM approach is poor in presence of one non-Gaussian covariate which in this case is a log-normal covariate. 

Table \ref{mseTable} provides the summary of mean squared errors (MSEs). The MSE is computed using the following formula, MSE $(\beta) = \frac{\sum_i^n (\beta_i - \hat\beta_i ) ^2}{n}$. Here, $n$ accounts for the number of simulation runs, $\beta$ is the true parameter of interest while $\hat{\beta}$ accounts for its estimate. In case of GCWM, the MSE for each parameter pertaining to each model are aggregated in Table~\ref{mseTable}. 
Overall, comparing Tables \ref{mseTable} and \ref{my-label} it is clear that GCWM outperforms CWM in both accuracy and MSE. 
\begin{table}[h!]
\centering
\caption{ GCWM results: the summary of MSE for all parameters used in five models. The covariate $X_3$ is treated as log-normal distributed, while the rest of covariates are Gaussian. These results correspond to same simulated runs as those in Table~\ref{gcwmAccuracy}.}
\label{mseTable}
\begin{tabular}{ll|rrrrrrrr}
\hline
Mod. & $K$ & $\beta_o$ &  MSE($\beta_o$)   &  $\beta_1$ & MSE($\beta_1$)& $\beta_2$ &MSE($\beta_2$)   & $\beta_3$ &  MSE($\beta_3$)  \\
\hline
1     & 1         & 1028& (11.353)   & 0.03& (0.00)  & 3.5& (0.00)    & -380& (0.09)   \\
      & 2         & 1600& (0.000)     & -0.01&(0.00) & 1.5&(0.00)    & -250&(0.00)   \\
      & 3         & 40000&(0.035)    & -6.00&(0.00) & -305&(0.00) & 1100&(0.47)   \\
\hline
2     & 1         & 1350&(0.167)     & 0.04&(0.00)  & 4.5&(0.00)    & -500&(0.03)   \\
      & 2         & 2080& (0.001)     & 0.04&(0.00)  & 2.0&(0.00)    & -325&(0.00)   \\
      & 3         & 52000& (0.012)    & -8.00&(0.00) & 450&(0.00)  & 14300&(0.01)  \\
\hline
3     & 1         & 720& (0.001)      & 0.02&(0.00)  & 2.5&(0.00)   & -266&(0.00)   \\
      & 2         & 1100& (0.008)     & 0.00&(0.00)  & 1.1&(0.00)    & -17511&(0.00) \\
      & 3         & 28000& (0.002)    & -4.20&(0.00) & 245&(0.00)  & 7700.&(0.00) \\
\hline
4     & 1         & 1650&(13.056)   & 0.05&(0.00)  & 5.3&(0.00)    & -570&(0.00)   \\
      & 2         & 2400& (0.000)     & -0.01&(0.00) & 2.3&(0.00)    & -375&(0.00)   \\
      & 3         & 60000& (0.051)    & -9.00&(0.00) & -457&(0.00) & 16500&(0.00)  \\
\hline
5     & 1         & 500& (1.115)     & 0.02&(0.00)  & 2.0&(0.00)    & -190&(0.05)   \\
      & 2         & 800& (0.003)      & 0.00&(0.00)  & 0.8&(0.00)    & -120&(0.00)   \\
      & 3         & 20000& (0.000)    & -3.00&(0.00) & -150&(0.00) & 5500&(0.00)  \\
      \hline
\end{tabular}
\end{table}
\begin{table}[!htb]
\centering
\caption{CWM results: the summary of MSE for all parameters used in five models. All three covariates are treated as Gaussian. These results correspond to same simulated runs as those in Table~\ref{gcwmAccuracy}.}
\label{my-label}
\begin{tabular}{ll|rrrrrrrr}
\hline
Mod. & $K$ & $\beta_o$ &  MSE($\beta_o$)   &  $\beta_1$ & MSE($\beta_1$)& $\beta_2$ &MSE($\beta_2$)   & $\beta_3$ &  MSE($\beta_3$)  \\
\hline
1     & 1         & 1028& ($\cdot$)   & 0.03&  ($\cdot$)   & 3.5&  ($\cdot$)    & -380&  ($\cdot$)    \\
      & 2         & 1600&  ($\cdot$)      & -0.01& ($\cdot$)  & 1.5& ($\cdot$)     & -250& ($\cdot$)  \\
      & 3         & 40000& ($\cdot$)     & -6.00& ($\cdot$)  & -305& ($\cdot$)  & 1100& ($\cdot$)    \\
\hline
2     & 1         & 1350& ($\cdot$)     & 0.04& ($\cdot$) & 4.5& ($\cdot$)    & -500& ($\cdot$)  \\
      & 2         & 2080&  ($\cdot$)    & 0.04& ($\cdot$)   & 2.0& ($\cdot$)     & -325& ($\cdot$)   \\
      & 3         & 52000&  ($\cdot$)     & -8.00& (0.006)  & 450& (44.1)   & 14300& ($\cdot$)  \\
\hline
3     & 1         & 720&  ($\cdot$)     & 0.02& ($\cdot$)   & 2.5& ($\cdot$)    & -266& ($\cdot$)    \\
      & 2         & 1100&  (65.814)     & 0.00& ($\cdot$)   & 1.1& ($\cdot$)     & -17511& ($\cdot$)  \\
      & 3         & 28000& ($\cdot$)   & -4.20& ($\cdot$)  & 245& ($\cdot$)   & 7700.& ($\cdot$)  \\
\hline
4     & 1         & 1650& ($\cdot$)    & 0.05& ($\cdot$)  & 5.3& ($\cdot$)    & -570& ($\cdot$)  \\
      & 2         & 2400&  ($\cdot$)     & -0.01& ($\cdot$)  & 2.3& ($\cdot$)    & -375& ($\cdot$)    \\
      & 3         & 60000&  ($\cdot$)     & -9.00& ($\cdot$)  & -457& ($\cdot$)  & 16500& ($\cdot$)   \\
\hline
5     & 1         & 500&  ($\cdot$)     & 0.02& ($\cdot$)   & 2.0& ($\cdot$)   & -190& ($\cdot$)  \\
      & 2         & 800&  ($\cdot$)      & 0.00& ($\cdot$)   & 0.8& ($\cdot$)    & -120& ($\cdot$)  \\
      & 3         & 20000&  ($\cdot$)     & -3.00& (0.003)  & -150& (4.7) & 5500& ($\cdot$) \\
\hline
\end{tabular}
\end{table}

From Table~\ref{my-label} we can observe that the MSEs for most of the models and their corresponding coefficients are not calculated at all due to convergence failures and as such they are shown as $(\cdot)$. This is not surprising because Table \ref{gcwmAccuracy} shows the accuracy of CWM is low when attempting to model non-Gaussian predictors as Gaussian. 

In summary, our simulation results showed good performance of the GCWM approach in modeling non-Gaussian covariates. More specifically, these results show high accuracy when covariates are log-normal. In contrary, CWM fails to estimate parameters accurately when the Gaussian assumption is violated.

\subsection{Bernoulli-Poisson Partitioning Simulation Study}

In this section we show how the Bernoulli-Poisson (BP) partitioning method behaves under different conditions. The components are genereated under similar coefficients estimated from the \textbf{CASDatasets} package. Again for easy of interpretation, coefficients are rounded and treated as true parameters from which the simulated data is generated from. The mean and standard deviation of the covariates within each component was also taken into account when generating data. The first simulation examines the performance of the ZI-GCWM model for classification. We generate three components each with sample size $N=1000$ for a total of $3000$ simulated points.
The model generated is similar in the mean and standard deviations presented in Table \ref{summarycovariates}. Consider three simulated covariates with 
\begin{align}
g_P(\mathbb{E}\left[Y_{SimClaimNb}|\bm{x}, \bm{\beta}_k \right]) & = 
  \beta_{k0} +  \beta_{kSimDensity} x_{SimDensity} +  \beta_{kSimDriverAge} x_{SimDriverAge} \nonumber  \\ & + 
   \beta_{kSimCarAge} x_{SimCarAge} := \bm{\xTilda} \bm{\beta}_k^{'},  \label{poissonRegSim} \\
g_{ZI}(\mathbb{E}\left[Y_{SimClaimNb}|\bm{x} , \bar{\bm{\beta}}_k  \right]) & = 
  \bar{\beta_{k0}} +  \bar{\beta}_{kSimDensity} x_{SimDensity} +  \bar{\beta}_{kSimDriverAge} x_{SimDriverAge} \nonumber \\ & + 
  \bar{ \beta}_{kSimCarAge} x_{SimCarAge} =: \bm{\xTilda} \bar{\bm{\beta}}_k^{'}.  \label{zeroRegSim}
\end{align}
as their respective linear models.  The covariates $x_{SimDensity}$, $x_{SimDriverAge}$, and $x_{SimCarAge}$ are considered for both the Poisson and Bernoulli models. Furthermore the link functions $g_P$ and $g_{ZI}$ are chosen to be log-link and logit-link respectively. 
 Here, the ZI-GCWM classifies drivers based on simulated data into three components. The misclassification rate is calculated by the proportion of true labels placed in other components by the ZI-GCWM model.  The results of the simulation are aggregated in Table \ref{misclassTable}. We observe the total misclassification rate of $1.8 \% $, where the makority of misclassified components are between components two and three.
\begin{table}[!htb]
\begin{center}
\caption{Misclassfication rate and label comparison of generated data.}
\label{misclassTable}
\begin{tabular}{r| r r r| r r}
\hline
    True Labels       &  \multicolumn{3}{r}{ Classified }  \vline & Misclassification Rate  &  \\ \hline
   & 1                              & 2   & 3   &                            &  \\ \hline
 1              & 992                            & 3   & 5   & 0.80 \%                                      &  \\
 2              & 0                              & 990 & 10  & 1.00 \%                                       &  \\
 3              & 15                             & 20  & 965 & 3.50 \%                                      &  \\  \hline
                \multicolumn{4}{r}{Overall Misclassification Rate}        & 1.80 \%                  & \\
        		\multicolumn{4}{r}{Average Purity} & 98.23 \%  
        		\\ \hline
                \multicolumn{4}{r}{Adjusted Rand Index} & 0.9479 &  \\
    \hline
\end{tabular}
\end{center}
\end{table}

The simulation is expanded further to show how BP partitioning behaves over 1000 runs and under two different conditions. The first condition is defined as follows. The mean and standard deviations of covariates are taken directly from sample statistics of \textbf{CASDataset}. The second condition involves adjusting the means of two of the covariates so they are closer to each other. The goal is to show that the BP-method holds its use even when means among covariates are close. The conditions are divided into two scenarios. In the first scenario which we consider ``normal", the covariate means are taken directly from the sample data. In the second scenario defined as ``close", the covariate means are manipulated so that they are $20 \%$ closer to each other. This is a common problem in classification where if the means among two different components are close, then misclassification rate increases \citep{LimHwa}. The expansion of the simulation to $1000$ runs tests the accuracy of 3 different partitioning methods to initialize a zero-inflated model. The results of this expansion are aggregated in Table \ref{table:exper2}.  The Poisson partitioning method assumes that the presence of non-zeros will provide a better partitioning of the data-set. The Bernoulli partitioning method assumes that the presence of excess zeros will determine the best partitioning of the data-set. Finally, the BP partitioning method assumes that both methods are weighed equally and therefore both must be taken into account when partitioning the dataset. The mean and standard deviation of each measurement is provided in Table~\ref{table:exper2}.
\begin{table}[!htb]
\begin{center}
\caption{Aggregated results for the $1000$ run simulation, mean and standard deviations for each statistic are compared across three methods.}
\label{table:exper2}
{\small\begin{tabular}{llrrrrrr}
\hline
Type   & Condition & Poisson & ($\sigma $) & Bernoulli & ($ \sigma $) & BP & ($ \sigma $) \\
\hline
Misclassification Rate& normal        & 1.70\% & (6.00)       & 1.60\%  & (6.00)         & 1.10\% & (0.02)         \\
       & close      & 5.00\% & (7.00)       & 6.00\% & (2.00)         & 7.00\% & (4.00)         \\
Average Purity & normal     & 98.87\% & (2.00)    & 98.91\% & (2.25)      & 99.18\% & (0.81)     \\
       & close       & 95.38\% & (4.00)    & 94.55\% & (1.00)      & 96.95\% & (0.48)      \\
Adjusted Rand Index  & normal      & 0.9662 & (0.07)    & 0.9677  & (0.07)     & 0.9729 & (0.0217)      \\
       & close         & 0.8706 & (0.08)    & 0.8366 & (0.04)      & 0.8538 & (0.0453) \\
       \hline
\end{tabular}}
\end{center}
\end{table}

 Under condition normal, the BP method shows better performance in error and is found to be less sensitive than other methods with an error rate of $ 1.10 \% $ and a standard deviation of $ 0.02 \% $.  Furthermore, when the close condition is imposed, the partitioning using only the Bernoulli method has better performance in terms of accuracy. The adjusted Rand index \citep[ARI][]{hubert85} is the Rand index \citep{rand71} corrected for chance agreement. The Rand index, for two partitions, is simply the number of pair agreements divided by the total number of pairs. The ARI takes a value 1 for perfect class agreement and has expected value 0 under random classification.
 The ARI measurements across all methods are promising. In particular the BP partitioning method under the normal condition has a very good ARI with a small standard deviation. The Average Purity (AP) is calculated as the average of the diagonal classification entries. 
The AP for the BP partitioning method is the best out of all other methods, therefore the BP method is the most relevant for classification.

\section{Conclusion}

By accomplishing two main goals, in this work, we extend the class of generalized linear mixture CWM models to ZI-GCWM models. First, we proposed the methodology that allows for continuous covariates to follow a non-Gaussian distribution. This is an important extension, at least in insurance modeling context, as imposing Gaussian distribution on a skewed data may result in an suboptimal model fit. Second, we proposed a new CWM methodology that uses BP partitioning method and allows for implementation of zero-inflated CWM. 


Our proposed GCWM models allow applications in predictive modeling of insurance claims by overcoming a few limitations of the current CWM models. The GCWM allows for finding clusters within claims frequency which is an important information in risk classification and modeling of claims frequency. Further, some insurance rating variables used in the predictive modeling of severity claims may not strictly follow Gaussian assumptions, for example driver's age or car age, when treated as continuous covariates. An adequate extension to non-Gaussian covariates can be considered to relax current assumptions and improve the model fit. Given our data, we convincingly demonstrated that there is a need for a log-normal assumption in the $Density$ covariate, and by making it we have considerably improved the model fit. 

The results of our extensive simulation study showed the excellent performance of the proposed models in case of modeling non-Gaussian covariates. We found that the CWM model fails to estimate the parameters accurately when the Gaussian assumption is violated. The GCWM a shows significant improvement in the model fit over the CWM model based on AIC and BIC criteria. We also tested BP partitioning of zero-inflated GCWM under different conditions and found that our proposed partitioning method has a very low misclassification rate, high average purity, and high average ARI. Our approach is highly relevant to actuarial pricing and risk management, where current practices are based on implementation of various GLM models.

\appendix
\section{Multivariate Log-Normal Distribution }
Consider a random variable $U$ having univariate log-normal distribution with parameters $\mu \in \mathbb{R}$ and $\sigma \in \mathbb{R}_+ $. Have $u \in \mathbb{R}_+$, then the probability density function of random variable $U$ is defined as \footnote{For full definition see \cite{johnson1995continuous}}
$$\mathcal{LN}(u; \mu, \sigma) = \frac{1}{u\sigma\sqrt{2\pi}}\exp\left[-\frac{(\ln u - \mu)^2}{2\sigma^2}	\right].$$
\text{Further, if random variable }$X$\text{ is normally distributed i.e. }$ X \sim \mathcal{N}(x; \mu, \sigma) $, then $U := \exp{(X)}\sim \mathcal{LN}(u; \mu, \sigma) $.
To see this, let $p_U(u)$, and $ p_X(x) $ be the probability density functions of $U$ and $X$ respectively. By the change of variables theorem (see \cite{murphy2012machine} section 2.6.2.1) the density $p_U(u)$ is derived as
$$p_U(u) = p_X(\ln u )\frac{\partial}{\partial u} \ln u  =  p_X(\ln u ) \frac{1}{u} =  \frac{1}{u\sigma\sqrt{2\pi}}\exp\left[-\frac{(\ln u - \mu)^2}{2\sigma^2}	\right].$$\newline
 We extend to a log-normal multivariate case where the random variable $\bm{U} $ is parameterized by $ \bm{\mu} \in \mathbb{R}^p$ and $\bm{\Sigma} \in  \mathbb{R}_{+}^{p \times p} \label{changeVarUni} $.
\begin{lemma}
Let the random variable $\bm{X}$ have multivariate normal distribution ie. $\bm{X} \sim \mathcal{MVN}(\bm{x}, \bm{\mu},\bm{\Sigma}) $, then $\bm{U} := \exp(\bm{X} ) \sim  f^U(\bm{u}; \bm{\mu } , \bm{\Sigma} )$. Here
have $\bm{u} \in \mathbb{R}_{+}^p $ and the probability density function $f^U$ is
$$ f^U(\bm{u}; \bm{\mu } , \bm{\Sigma} )= \frac{1}{(\prod_{i=1}^{p}u_{i})| \bm{\Sigma} |(2 \pi)^{\frac{p}{2}}}   \exp\left[-\frac{1}{2}(\ln \bm{u} -\bm{\mu})^{'}  \bm{\Sigma}^{-1}(\ln \bm{u} -\bm{\mu})\right].  $$
\end{lemma}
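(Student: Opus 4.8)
The plan is to obtain $f^U$ directly from the multivariate change-of-variables formula, mirroring the univariate computation carried out just above the statement. First I would set up the transformation $g \colon \mathbb{R}^p \to \mathbb{R}_+^p$ acting coordinatewise by $g(\bm{x}) = (e^{x_1},\dots,e^{x_p})'$, and note that it is a smooth bijection onto the open positive orthant with smooth inverse $g^{-1}(\bm{u}) = (\ln u_1,\dots,\ln u_p)' = \ln\bm{u}$. Recording this bijectivity is the step that justifies the whole argument and also pins down the support of $\bm{U}$ as $\mathbb{R}_+^p$ rather than all of $\mathbb{R}^p$.

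Next I would compute the Jacobian of the inverse map. Since $g^{-1}$ acts componentwise, the matrix $\partial\bm{x}/\partial\bm{u}$ is diagonal with entries $\partial(\ln u_i)/\partial u_i = 1/u_i$, so $\det(\partial\bm{x}/\partial\bm{u}) = \prod_{i=1}^{p} u_i^{-1}$, which is strictly positive on $\mathbb{R}_+^p$ so that the absolute value is vacuous. The change-of-variables theorem (the same one invoked for the univariate case) then gives
\[
f^U(\bm{u}) \;=\; f^X\bigl(g^{-1}(\bm{u})\bigr)\left|\det\frac{\partial\bm{x}}{\partial\bm{u}}\right| \;=\; f^X(\ln\bm{u})\prod_{i=1}^{p}\frac{1}{u_i}.
\]

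Finally I would substitute the multivariate normal density $f^X(\bm{x}) = (2\pi)^{-p/2}\,|\bm{\Sigma}|^{-1/2}\exp\!\bigl[-\tfrac12(\bm{x}-\bm{\mu})'\bm{\Sigma}^{-1}(\bm{x}-\bm{\mu})\bigr]$ evaluated at $\bm{x}=\ln\bm{u}$, and absorb $\prod_{i} u_i^{-1}$ into the normalizing constant to read off
\[
f^U(\bm{u}) \;=\; \frac{1}{\bigl(\prod_{i=1}^{p} u_i\bigr)\,|\bm{\Sigma}|^{1/2}\,(2\pi)^{p/2}}\exp\!\left[-\tfrac12(\ln\bm{u}-\bm{\mu})'\bm{\Sigma}^{-1}(\ln\bm{u}-\bm{\mu})\right],
\]
which is the density displayed in the statement. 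I do not expect any real obstacle here: the only things to be careful about are applying the transformation in the correct direction (multiplying by the Jacobian of $g^{-1}$, equivalently dividing by that of $g$) and keeping the domain restricted to $\mathbb{R}_+^p$ throughout; there is no analytic difficulty, and the argument is exactly the multivariate analogue of the displayed univariate change of variables.
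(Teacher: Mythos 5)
Your proposal is correct and follows essentially the same route as the paper: both apply the multivariate change-of-variables theorem to the componentwise exponential, compute the diagonal Jacobian with determinant $\prod_{i=1}^{p} u_i^{-1}$, and substitute the multivariate normal density evaluated at $\ln\bm{u}$. The only difference is cosmetic and in your favor: your final expression carries the correct normalizing factor $|\bm{\Sigma}|^{1/2}$, whereas the lemma as displayed writes $|\bm{\Sigma}|$ without the square root, which is a typographical slip in the paper rather than a flaw in your argument.
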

\begin{proof}
Let $f^U(\bm{u}; \bm{\mu},\bm{\Sigma})$ and $f^X(\bm{x}; \bm{\mu},\bm{\Sigma})$ be the probability density functions of $\bm{U}$ and $\bm{X}$ respectively. By the multivariate change of variables theorem \citep[see][Section~2.6.2.1]{murphy2012machine}, we derive the log-normal distribution, where $ | \det K_{\ln} (u) | $ is the absolute value of the determinant for the Kacobian of the multivariate transformation $\ln(\bm{U}) = \bm{X} $. Hence,
\begin{align*}
 | \det K_{\ln} (\bm{u}) | & = \prod_{i=1}^p u_i^{-1}, \; \text{and} \; \\
   f^U(\bm{u}; \bm{\mu},\bm{\Sigma})  & =  f^X(\ln \bm{u}; \bm{\mu},\bm{\Sigma})  | \det K_{\ln} (u) | \\
  & = f^X(\ln \bm{u}; \bm{\mu},\bm{\Sigma})\prod_{i=1}^p u_i^{-1} \\
  & =  \frac{1}{(\prod_{i=1}^{p}u_{i})| \bm{\Sigma} |(2 \pi)^{\frac{p}{2}}}   \exp\left[-\frac{1}{2}(\ln \bm{u} -\bm{\mu})^{'}  \bm{\Sigma}^{-1}(\ln \bm{u} -\bm{\mu})\right].
  \end{align*}
\end{proof}

\begin{center}
\begin{sidewaystable}

\section{Tables}\label{app:tables}

\caption{ Summary of coefficients for severity clusters.}
\label{severity_coef_table}
{\small\begin{tabular}{l|rrc|rrc|rrc|rrc}
\hline
         & V1         & (blue)     &    & V2         & (green)   &    & V3          & (red)     &    & V4          & (orange)     &    \\
Coefficient \footnote{The significance codes are defined as $  P < 0.001 : $  (***), $0.001 < P < 0.01:$ (**), $  0.01 < P < 0.05:$ (*),\\ $0.05 < P < 0.10 : $ (.) 
pertaining to the $P$ value of the specific coefficient. C\# refers to the Car Age category, D\# refers to the Driver Age category, R\# refers to the region of France, and P refers to the power category.  }      & Estimate   & Error     & P   & Estimate   & Error     & P   & Estimate    & Error      & P   & Estimate    & Error      & P   \\ \hline
Int. & 7.077 & 0.003 & *** & 6.952 & 0.043 & *** & 7.306 & 0.138 & *** & 7.212 & 0.136 & *** \\
Density & 0.000 & 0.000 &  & -0.009 & 0.003 & ** & -0.006 & 0.011 &  & -0.052 & 0.014 & *** \\
C2 & 0.008 & 0.002 & *** & 0.069 & 0.023 & ** & -0.278 & 0.064 & *** & 0.329 & 0.074 & *** \\
C3 & 0.002 & 0.002 &  & 0.222 & 0.023 & *** & -0.460 & 0.064 & *** & 0.161 & 0.074 & * \\
C4 & 0.004 & 0.002 & . & 0.075 & 0.024 & ** & -0.693 & 0.066 & *** & 0.103 & 0.074 &  \\
C5 & 0.009 & 0.002 & *** & 0.102 & 0.027 & *** & -0.608 & 0.076 & *** & 0.234 & 0.081 & ** \\
D2 & -0.007 & 0.002 & *** & 0.031 & 0.026 &  & -0.210 & 0.080 & ** & -0.690 & 0.068 & *** \\
D3 & -0.008 & 0.002 & *** & -0.021 & 0.026 &  & -0.250 & 0.081 & ** & -0.834 & 0.069 & *** \\
D4 & -0.012 & 0.002 & *** & -0.014 & 0.031 &  & -0.122 & 0.091 &  & -0.753 & 0.084 & *** \\
D5 & -0.006 & 0.002 & ** & 0.078 & 0.032 & * & 0.108 & 0.096 &  & -0.182 & 0.083 & * \\
R23 & 0.002 & 0.004 &  & -0.059 & 0.036 & . & 0.115 & 0.110 &  & -0.007 & 0.122 &  \\
R24 & -0.013 & 0.001 & *** & 0.091 & 0.016 & *** & -0.279 & 0.042 & *** & -0.003 & 0.075 &  \\
R25 & -0.019 & 0.002 & *** & -0.362 & 0.030 & *** & -0.027 & 0.086 &  & 0.257 & 0.099 & ** \\
R31 & -0.002 & 0.002 &  & 0.025 & 0.020 &  & 0.111 & 0.053 & * & 0.035 & 0.106 &  \\
R52 & -0.016 & 0.002 & *** & -0.002 & 0.019 &  & -0.260 & 0.051 & *** & 0.015 & 0.085 &  \\
R53 & -0.013 & 0.002 & *** & 0.119 & 0.019 & *** & -0.106 & 0.053 & * & 0.092 & 0.082 &  \\
R54 & -0.014 & 0.002 & *** & 0.099 & 0.026 & *** & -0.295 & 0.072 & *** & 0.117 & 0.090 &  \\
R72 & -0.008 & 0.002 & *** & 0.123 & 0.021 & *** & 0.003 & 0.056 &  & 0.239 & 0.088 & ** \\
R74 & -0.020 & 0.003 & *** & -0.125 & 0.050 & * & -0.141 & 0.170 &  & 0.131 & 0.118 &  \\
P-FGH & 0.001 & 0.001 & . & 0.006 & 0.011 &  & 0.108 & 0.030 & *** & 0.003 & 0.030 &  \\
P-Other & 0.005 & 0.001 & *** & 0.013 & 0.014 &  & 0.116 & 0.038 & ** & 0.057 & 0.041 &  \\
\hline
\end{tabular}}
\end{sidewaystable}

\begin{sidewaystable}
\caption{Summary of coefficients for frequency clusters.}
\label{frequencySummary}
\begin{tabular}{l|rrc|rrc|rrc}
\hline
          & Cluster 1 & (green) &   & Cluster 2 & (red) &  & Cluster 3 & (blue) &   \\
Coefficient \footnote{The significance codes are defined as $  P < 0.001 : $  (***), $0.001 < P < 0.01:$ (**), $  0.01 < P < 0.05:$ (*),\\ $0.05 < P < 0.10 : $ (.) 
pertaining to the $P$ value of the specific coefficient. C\# refers to the Car Age category, D\# refers to the Driver Age category.  }         & Estimate  & Error & P   & Estimate  & Error   & P   & Estimate  & Error  & P  \\
 \hline
Intercept & -4.71437 & 0.16072 & *** & -4.21884 & 0.23567 & *** & 80.34056  & 4.82835 & *** \\
Density   & 0.37081  & 0.03414 & *** & 0.20128  & 0.02916 & *** & -15.36444 & 0.87135 & *** \\
D2        & -0.09767 & 0.05987 &     & -0.15287 & 0.06008 & *   & -0.28067  & 0.3133  &     \\
D3        & -0.28937 & 0.06156 & *** & -0.30113 & 0.06088 & *** & -0.33352  & 0.31412 &     \\
D4        & -0.02955 & 0.08005 &     & 0.03986  & 0.0711  &     & -1.02815  & 0.39399 & **  \\
D5        & 0.6164   & 0.07787 &     & 0.47363  & 0.07558 & *** & 0.11692   & 0.46536 &     \\
C2        & 0.67169  & 0.67169 & *** & 0.59803  & 0.59803 & *** & -1.00529  & 0.79119 &     \\
C3        & 0.69215  & 0.69215 & *** & 0.85653  & 0.85653 & *** & 2.05854   & 0.71373 & **  \\
C4        & 0.63158  & 0.63158 & *** & 0.76843  & 0.76843 & *** & 2.20552   & 0.71347 & **  \\
C5        & 0.36033  & 0.36033 & *** & 0.52438  & 0.52438 & *** & 2.06543   & 0.72307 & **  \\
\hline
Intercept & -3.9712  & 0.5473  & *** & -1.66782 & 0.37735 & *** &           &         &     \\
Density   & 0.9032   & 0.1041  & *** & 0.28258  & 0.04674 & *** &           &         &    \\
\hline
\end{tabular}
\end{sidewaystable}
\end{center}
\newpage
\section*{Acknowledgements}
The authors express sincere gratitude to Dr. Paul Wilson at the School of Mathematics and Computer Science, University of Wolverhampton for his kind help and advice. In addition, the authors wholeheartedly thank the author and maintainer of the French Motor Policy dataset, Dr. Christophe Dutang at the  Universit\' e Paris Dauphine, for his generous support. Finally, the authors thank Dr. Ben Bolker at the Mathematics and Statistics Department, McMaster University for his kind advice.

\end{document}